\newtheorem{theorem}{Theorem}
\newtheorem{corollary}{Corollary}
\newtheorem{definition}[theorem]{Definition}
\newenvironment{proof}{\noindent{\bf Proof: }}{\qed \smallbreak}
\begin{document}
\begin{frontmatter}

\title{Designing Path Planning Algorithms for\\ Mobile Anchor towards Range-Free Localization}
\author[label1]{Kaushik Mondal}
\author[label2]{Arindam Karmakar}
\author[label1]{Partha Sarathi Mandal}
\address[label1]{Indian Institute of Technology Guwahati, India}
\address[label2]{Tezpur University, India}


\begin{abstract}
Localization is one of the most important factor in wireless sensor networks as
many applications demand position information of sensors.
Recently there is an increasing interest on the use of mobile anchors for localizing sensors.
Most of the works available in the literature either looks into the aspect of reducing path
length of mobile anchor or tries to increase localization accuracy.
The challenge is to design a movement strategy for a mobile anchor that reduces path length
while meeting the requirements of a good range-free localization technique.
In this paper we propose two cost-effective movement strategies i.e., path planning for a mobile anchor
so that localization can be done using the localization scheme \cite{Lee2009}.
In one strategy we use a hexagonal movement pattern for the mobile anchor to localize all sensors inside a bounded
rectangular region with lesser movement compared to the existing works in literature.
In other strategy we consider a connected network in an unbounded region where the mobile anchor
moves in the hexagonal pattern to localize the sensors. In this approach, we guarantee localization of
all sensors within $r/2$ error-bound where $r$ is the communication range of the mobile anchor and sensors.
Our simulation results support theoretical results along with localization accuracy.
\end{abstract}

\begin{keyword}
Localization\sep Range-free\sep Beacon point\sep Path planning\sep Mobile anchor\sep Wireless Sensor Networks
\end{keyword}

\end{frontmatter}

\section{Introduction}
\label{sec:intro}
Localization of wireless sensors with high degree of accuracy is required for many
wireless sensor networks (WSNs) applications, such as security and surveillance,
object tracking, detecting accurate location of a target etc.
To meet this purpose, many sensor localization schemes
\cite{Ammar2010,Chen2012,Delaet11,Lee2009,Seow:2008,Ssu2005,Xiao2008,zhang2006} have been proposed for WSNs.
These schemes can be viewed as range-based or range-free. In range-based schemes, the sensor locations
are calculated using distance and/or angle information among sensors by ranging hardware.
On the other hand, connectivity constraints such as hop-count, anchor beacons etc are used in range-free
schemes. Usually range-based schemes are more accurate than range-free schemes. But range estimation
techniques in range-based schemes are erroneous as well as costly due to requirement of special hardware,
which encouraged researcher to design range-free schemes for sensor localization.
A number of static anchors are needed in the localization schemes like \cite{Ammar2010,Chen2012,Delaet11,Seow:2008,zhang2006}
which uses static anchors.
To minimize the number of static anchors, localization schemes \cite{Lee2009,Ssu2005,Xiao2008} using mobile anchor are proposed.
One mobile anchor with a suitable path planning is equivalent to many static anchors,
which localizes whole network. By using mobile anchor, we can save large number of anchors with deployment cost
in the expense of the mobility of the mobile anchor. So, path planning of the mobile anchor has
become an important issue in the area of localization. 
There are few proposed movement strategies which localizes sensors using some basic
techniques like trilateration \cite{Shih2010}, which causes large localization error. Our aim is to
propose a movement strategy such that we can use existing range-free localization schemes which
yields better accuracy.

In this paper we have proposed path planning schemes for the mobile anchor where localization
is done using localization scheme proposed by Lee et al.\cite{Lee2009}.
We have proposed two different movement strategies for two different assumptions.
One movement strategy is proposed on the assumption that the network is connected.
The anchor localizes every sensor of the network with connectivity guided movement.
The other movement strategy is proposed to localize sensors over a bounded rectangular
region where the anchor has to cover the whole rectangle to ensure localization of all sensors.
In this strategy, only boundary information is used to localize all the sensors irrespective of
deployment and underlying network topology.

\subsection*{Our contribution:}
In this paper we have proposed a hexagonal movement strategy for a mobile anchor
to localize static sensors with improved (lesser) movement of the mobile anchor
compared to the existing results in literature. We have divided our work in two parts.
First part assumes connectivity in the network whereas later we have used known boundary
of a rectangular region where the static sensors reside. Our achievements are the following.
\begin{itemize}
\item We have proposed a distributed range-free movement strategy to localize all sensors within $r/2$ error-bound
in a connected network, where $r$ is the transmission range of the sensors and the mobile anchor.
\item We have given another path planning scheme for a bounded rectangular region
using same hexagonal movement pattern.
\item Theoretically we have shown that
the length of the path traversed by the anchor is lesser in the
proposed strategy compared to other existing path planning methods for covering
a rectangular region. 
\item Our simulation results support all theoretical results for path planning with localization accuracy.
Simulation results show $13.45 \%$ to $25.35 \%$ on an average improvement of our scheme over different schemes in
terms of path length while covering a bounded rectangular region.
\end{itemize}

The organization of the remaining part of the paper is as follows. In section \ref{sec:reltdwrk} we discuss
about related works. The theoretical results of our
proposed path planning on a connected network are explained in section \ref{sec:tech1}.
The algorithm along with system model are given in section \ref{sec:algomodel}.
In section \ref{sec:tech2}, we propose a movement strategy of mobile anchor for a bounded rectangular region.
The simulation results are presented in section \ref{sec:sim}, along with performance
comparison with existing approaches. Finally we conclude in section \ref{sec:conclusion}.

\section{Related Works}
\label{sec:reltdwrk}
Path planning algorithms set path for mobile anchor along which it moves in
the network while localization process goes on. First we look at a
brief overview of the existing range free localization schemes which provides
good accuracy and can be used for localization. Ssu et al. proposed
a localization scheme in \cite{Ssu2005} where the sensor's position is estimated as the
intersection of perpendicular bisector of two calculated chords.
However this scheme suffers from short chord length problem. Xiao et al.\cite{Xiao2008} improved
over that scheme using pre-arrival and post-departure points along with the beacon points
to localize a sensor. Later Lee et al. used beacon distance more effectively as another
geometric constraint and proposed a more accurate localization scheme in \cite{Lee2009}.

We can view the path planning problem in two different ways depending on the knowledge of the area of sensor deployment
and the underlying topology formed by the sensors. Topology-based path planning, can be viewed as a
graph traversal problem. Sensors have information about their neighbors which they send to the mobile anchor
for determining the path. Li et al. proposed two algorithms named breadth first
and backtracking greedy algorithms in \cite{Li2008}.
Mitton et al. in \cite{Mitton2012} proposed a depth first traversal scheme by the mobile anchor
to localize the sensors. Both these works need range estimations.
Kim et al. proposed a path planning in \cite{Kim2011} for  randomly deployed sensors
using trilateration method for localization. An already localized sensor
becomes a reference point to help other sensors to find their position
which reduces path length but localization error may propagate.
Chang  et al. proposed another path planning algorithm of the mobile anchor
in \cite{Chang2012} where localization have been done using the scheme proposed by Galstyan et al.
in \cite{Galstyan2004} and mobile sensor calculates its trajectory by moving around already localized sensors.
Our aim is to propose a path planning algorithm which can decide its trajectory without using any range
estimation in a connected network. Using connectivity of the network, we discover neighbors of a sensor as
well as localize it by the scheme \cite{Lee2009} using our proposed path planning algorithm.

The other way of viewing the path planning problem is to cover a rectangular area
by the mobile anchor where all the sensors are deployed.
$Scan$, $Doublescan$ and $Hilbert$ schemes are proposed by Koutsonikolas et al. in \cite{Koutsonikolas2007}.
They used the localization scheme proposed in \cite{Sichitiu2004}. Scan covers the whole area
uniformly where the mobile anchor travels in line segments along $x$-axis (or $y$-axis) keeping a
fixed distance between two line segments. In $Doublescan$, anchor moves along both $x$-axis
and $y$-axis, which improves localization accuracy in the expense of traveled distance.
$Hilbert$ reduces both error and path length with compare to the other two.
Huang et al. proposed two path planning schemes namely $Circles$ and {\it S-curves} in \cite{Huang2007}.
Simulation results show that these two schemes produce better results than those discussed above.
Based on trilateration, Han et al.  proposed a path planning scheme in \cite{Han2013} for a
mobile anchor. Using Received Signal Strength Indicator (RSSI) technique, sensor measures distances from three
different non collinear points and finds its position. Chia-Ho-Ou et al. proposed
a movement strategy in \cite{Chia-Ho-Ou2013} of the anchor which helps sensors to localize
with good accuracy by reducing the short chord length problem of Ssu's scheme \cite{Ssu2005}.
Our aim is to propose a path planning which minimizes the path length compared to the
existing ones and guarantee positioning of each sensor using scheme
proposed by Lee et al. in \cite{Lee2009}.

\section{Path Planning for Connected Network}
\label{sec:tech1}
In this section we discuss path planning to localize an arbitrary connected network of any number of sensors.
The mobile anchor broadcasts beacon with its position information after every $t$ time interval.
We may use the term `anchor' instead `mobile anchor' in the rest part of this paper. Here required definitions are given below.
\begin{definition}
{\rm(}Beacon distance{\rm)} Distance traveled by the mobile anchor between two consecutive broadcasts
of beacon is called {\it beacon distance} and is denoted by $u$.
\end{definition}
\begin{definition}
{\rm(}Communication circle{\rm)} The circle with radius $r$ centering at the sensor, where $r$ is the communication range of the sensor.
\end{definition}
\begin{definition}
{\rm(}LRH{\rm)} Largest regular hexagon inscribed within the communication circle of any sensor.
\end{definition}
\begin{definition}
{\rm(}Beacon point{\rm)}
The position of the anchor that is extracted from the beacon received by a sensor at time $x$ is denoted as a beacon point for the sensor if and only if the sensor does not receive any beacon either in time interval $[x-t_0, x)$ or in time interval $(x, x+t_0]$, where $t_0$ is the waiting time such that $t<t_0<2t$ and $t$ is time interval of periodical broadcasts of beacon by the anchor.
\end{definition}
In this paper we use $Q$ as a sensor as well as a point in the plane that defines the location of that sensor.
The algorithm begins with localizing a sensor $Q$ by random movement of the anchor.
The localized sensor broadcasts its position which is received by the anchor and according to our movement strategy,
the anchor reaches at any point $A$ on the communication circle of the localized sensor $Q$.
The anchor computes the LRH inscribed within the communication circle with $A$ as a vertex as shown in Figure \ref{f:fig7}. At the same time all the other vertices of LRH are also computed by the anchor. Then the anchor starts moving along the LRH and broadcasting beacons with its current position along with all vertices of the LRH at regular interval so that any sensor which receives a beacon, knows the LRH.
If the time interval between two beacons received by a sensor is more than $t_0$, the sensor can easily identify the LRH on which the anchor is moving. At the same time all the neighbors of the localized sensor $Q$ marks at least two beacon points which help them to
compute two probable positions of themselves according to the scheme \cite{Lee2009}. We briefly discuss below how the localization scheme \cite{Lee2009} works.
\begin{figure}[h]
\psfrag{A}{$A$}
\psfrag{C}{$C$}
\psfrag{C'}{\hspace{-1mm}$C'$}
\psfrag{C''}{$C''$}
\psfrag{T}{$T$}
\psfrag{T'}{$T'$}
\psfrag{Q}{\hspace{-1mm}$Q$}
\psfrag{Q'}{\hspace{-1mm}$Q'$}
\psfrag{N}{$N$}
\psfrag{N'}{\hspace{-1mm}$N'$}
    \centering
    \subfloat[$Q$ and $Q'$ are the two possible positions]{\label{f:fig0}\includegraphics[width=0.35\textwidth]{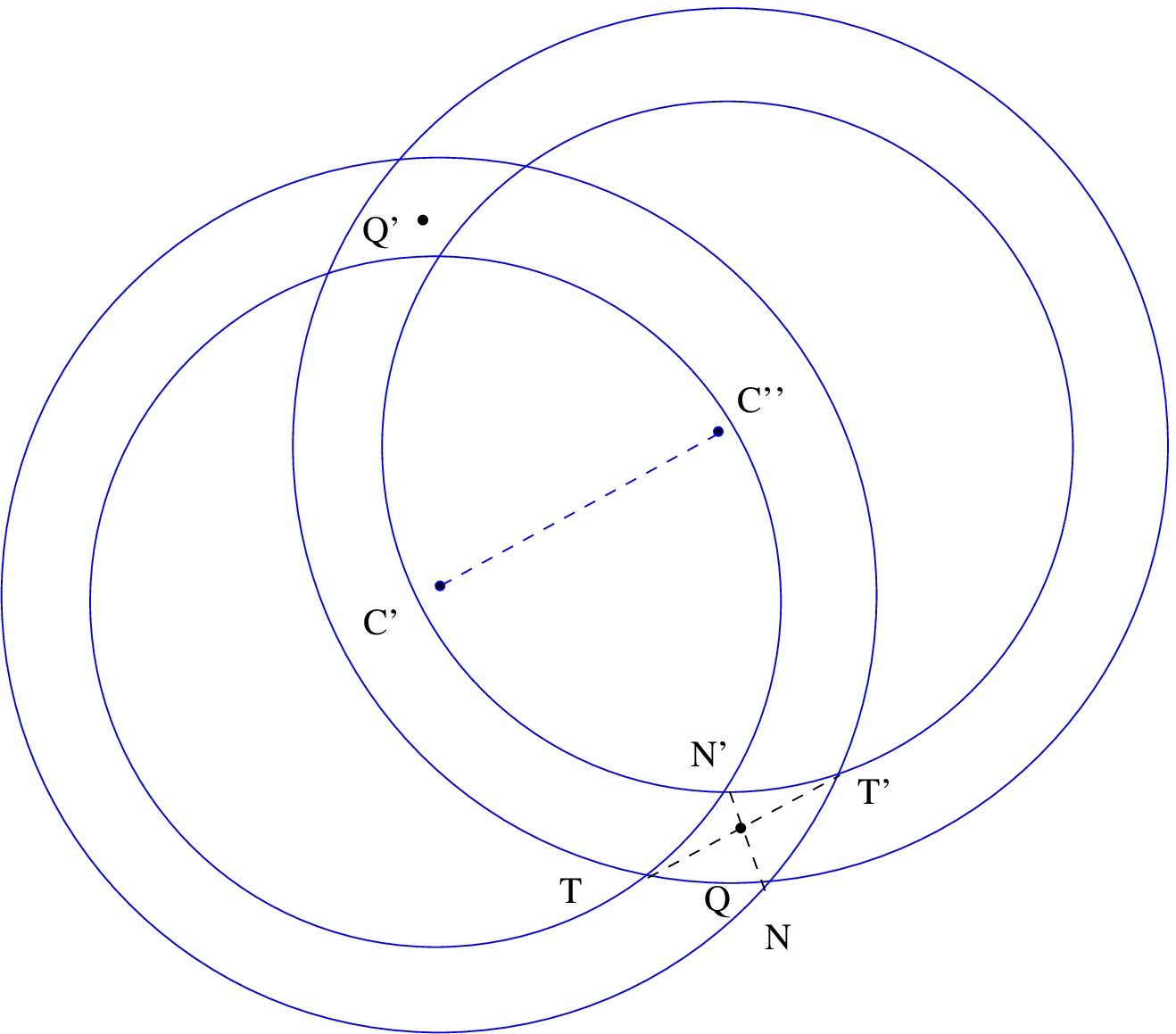}}
    ~~~~~~~
    \subfloat[Sensor chooses its position at $Q$]{\label{f:fig7}\includegraphics[width=0.35\textwidth]{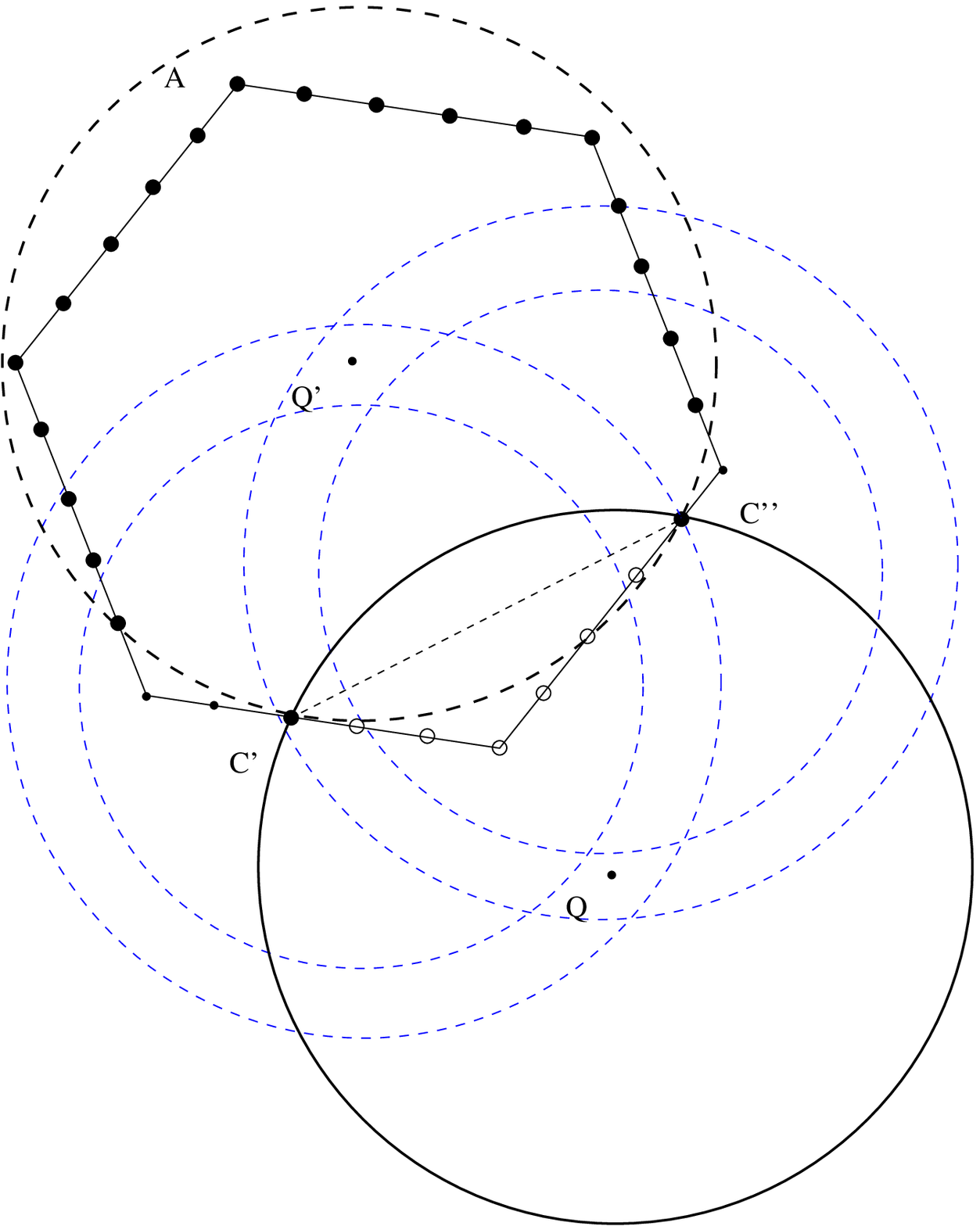}}
\caption{Detection of correct position using hexagonal movement pattern of mobile anchor}
\end{figure}

Let $C'$ and $C''$ be two beacon points marked by a sensor as shown in  Figure \ref{f:fig0}.
Sensor lies on the one of the two intersections of the circular laminae with radius
$(r-u)$ and $r$ centering at the beacon points $C'$ and $C''$,  where u is the beacon distance.
As the intersection is an area, the position of the sensor is considered as the intersection point $Q$ of $NN'$ and $TT'$.
So maximum localization error is equal to $max(NN'/2, TT'/2)$.
Similarly, $Q'$ is another possible position. A third beacon point helps to choose the correct one among $Q$ and $Q'$.
In our work we do not need the third beacon point to localize a sensor. Hexagonal movement strategy helps to
choose the correct one among those two. According to  Figure \ref{f:fig7}, the sensor chooses $Q$ as its
position instead of $Q'$ depending on the received beacons other than those beacon points $C'$ and $C''$. Since sensor
knows the LRH, it finds that if $Q'$ would have been its position than it should have received
beacons at those positions indicated by filled circles on the LRH as shown in  Figure \ref{f:fig7}.
Instead of that, those unfilled circles shown in  Figure \ref{f:fig7} on the LRH are the beacons received by $Q$.
So, sensor selects $Q$ as its position. This selection method will fail only if there exist a common set of beacons
for two different sensor positions, which is not possible.
Following theorems guarantee that all neighbors of any sensor can localize by one complete movement of the
mobile anchor along the LRH around that sensor.

\begin{theorem}
\label{the1:errbd}
Using the scheme \cite{Lee2009} of Lee et al., localization error remains less than $r/2$  for a suitable beacon
distance $u$ if $l\geq (r-u)$, where $l$ is the distance between two beacon points and $r$ is the communication range.
\end{theorem}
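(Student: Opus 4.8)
The plan is to follow the error model set up just before the statement: the sensor lies in the component (containing it) of the intersection of the two annuli $\{\,r-u\le \mathrm{dist}(\cdot,C')\le r\,\}$ and $\{\,r-u\le \mathrm{dist}(\cdot,C'')\le r\,\}$, this component is a curvilinear quadrilateral, $NN'$ and $TT'$ are its two diagonals, $Q=NN'\cap TT'$ is the reported position, and the localization error is at most $\max(|NN'|,|TT'|)/2$. So it suffices to show $|NN'|<r$ and $|TT'|<r$. I would place the beacon points at $C'=(-l/2,0)$ and $C''=(l/2,0)$, so that the configuration is symmetric about the $y$-axis (the perpendicular bisector of $C'C''$). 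The four vertices of the quadrilateral are then $P_1$ (outer circle of $C'$ $\cap$ outer circle of $C''$) and $P_2$ (inner circle $\cap$ inner circle), both on the $y$-axis, and $P_3,P_4$, the two ``mixed'' intersections (radius $r$ about one centre, radius $r-u$ about the other), which are mirror images across the $y$-axis and hence have equal ordinates. Thus $TT'=P_1P_2$ lies on the $y$-axis and $NN'=P_3P_4$ is parallel to $C'C''$; I would first record that $Q=(0,y_0)$ lies on both segments, so the error estimate applies.

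Next I would compute the two lengths. Since $P_1,P_2$ lie at heights $\sqrt{r^2-l^2/4}$ and $\sqrt{(r-u)^2-l^2/4}$ above $C'$,
\[
|TT'|=\sqrt{r^2-\frac{l^2}{4}}-\sqrt{(r-u)^2-\frac{l^2}{4}},
\]
which also records the nondegeneracy requirement $l\le 2(r-u)$; and subtracting the two circle equations satisfied by $P_3$ gives its abscissa as $u(2r-u)/(2l)$, so that
\[
|NN'|=\frac{u(2r-u)}{l}.
\]
For $TT'$ the bound is immediate: the beacon points are distinct, so $l>0$, hence $\sqrt{r^2-l^2/4}<r$, and the subtracted term is nonnegative, giving $|TT'|<r$ with no extra hypothesis. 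For $NN'$, the inequality $|NN'|<r$ is equivalent to $l>u(2r-u)/r$; since we are given $l\ge r-u$, it is enough that $r-u>u(2r-u)/r$, i.e.\ $r^2-3ru+u^2>0$, which holds precisely when $u/r<(3-\sqrt5)/2\approx 0.382$. This is exactly what ``a suitable beacon distance $u$'' should mean, and it forces $l\ge r-u>u(2r-u)/r$, hence $|NN'|<r$. Combining the two bounds, $\max(|NN'|,|TT'|)/2<r/2$.

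The distance computations are routine coordinate geometry; the step that needs care is the bookkeeping around ``suitable $u$''. One must check that a single smallness threshold on $u/r$ simultaneously (i) keeps the intersection region a genuine curvilinear quadrilateral (so $TT'$ and the vertex $P_2$ exist, which in general also uses that $l$ is itself bounded as in the hexagonal scheme; if $l>2(r-u)$ the region is a sub-lens and the bounds only improve), (ii) places $Q$ on both segments $NN'$ and $TT'$, and (iii) yields $r-u>u(2r-u)/r$. I expect pinning down that one clean condition on $u/r$ — and handling the mildly degenerate configurations — to be the only subtle point; the rest is elementary.
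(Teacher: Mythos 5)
Your analysis of the quadrilateral regime $r-u\le l\le 2(r-u)$ is correct and in fact sharper than the paper's Case~1: the exact formula $u(2r-u)/l$ for the chord parallel to $C'C''$ replaces the paper's obtuse-angle argument and yields the milder sufficient condition $u/r<(3-\sqrt{5})/2$ where the paper settles for $u<(\sqrt{2}-1)r/\sqrt{2}$. (Your labels $NN'$ and $TT'$ are swapped relative to the paper's; that is immaterial since only the maximum is used.)

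The genuine gap is your dismissal of the regime $l>2(r-u)$ with the remark that ``the region is a sub-lens and the bounds only improve.'' That claim is false, and this regime is exactly what determines the theorem's actual threshold. When $l>2(r-u)$ the two inner circles no longer meet, the intersection of the annuli becomes a single region symmetric about the line $C'C''$, the estimated position is the midpoint of $C'C''$, and the extent along the perpendicular bisector is the \emph{full} lens chord $2\sqrt{r^2-l^2/4}$ rather than the difference of two square roots. For $l$ just above $2(r-u)$ this is about $2\sqrt{u(2r-u)}$, which exceeds $r$ unless $u<(2-\sqrt{3})r/2\approx r/7.5$. Concretely, take $u=0.3r$ (comfortably inside your threshold $0.382r$) and $l$ slightly above $1.4r$: the chord is $2\sqrt{0.51}\,r\approx 1.43r$, so the half-diagonal exceeds $r/2$ and the claimed error bound fails. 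The paper's Case~2 treats precisely this situation and is the reason the ``suitable'' beacon distance ends up being $u<r/7.5$ rather than your much weaker condition. You correctly flagged the degenerate configurations as the one subtle point, but then resolved it the wrong way; the second case must be worked out explicitly, after which the rest of your argument stands.
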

\begin{proof}
There are two cases depending on the length of $l$.
\begin{description}
\item[Case 1 ($r-u\leq l \leq 2r-2u$):]
Figure \ref{f:figCase1} illustrates the case.
According to  Figure \ref{f:fig1}, $C'$ and $C''$ be the beacon points received by a sensor such that $r-u\leq l \leq 2r-2u$,
where $C'C''=l$.
\begin{figure}[h]
\psfrag{C'}{$C'$}
\psfrag{C''}{$C''$}
\psfrag{T}{$T$}
\psfrag{T'}{$T'$}
\psfrag{N}{$N$}
\psfrag{N'}{$N'$}
\psfrag{M}{$M$}
    \centering
    \subfloat[Showing the line segment $NN'$]{\label{f:fig1}\includegraphics[width=0.31\textwidth]{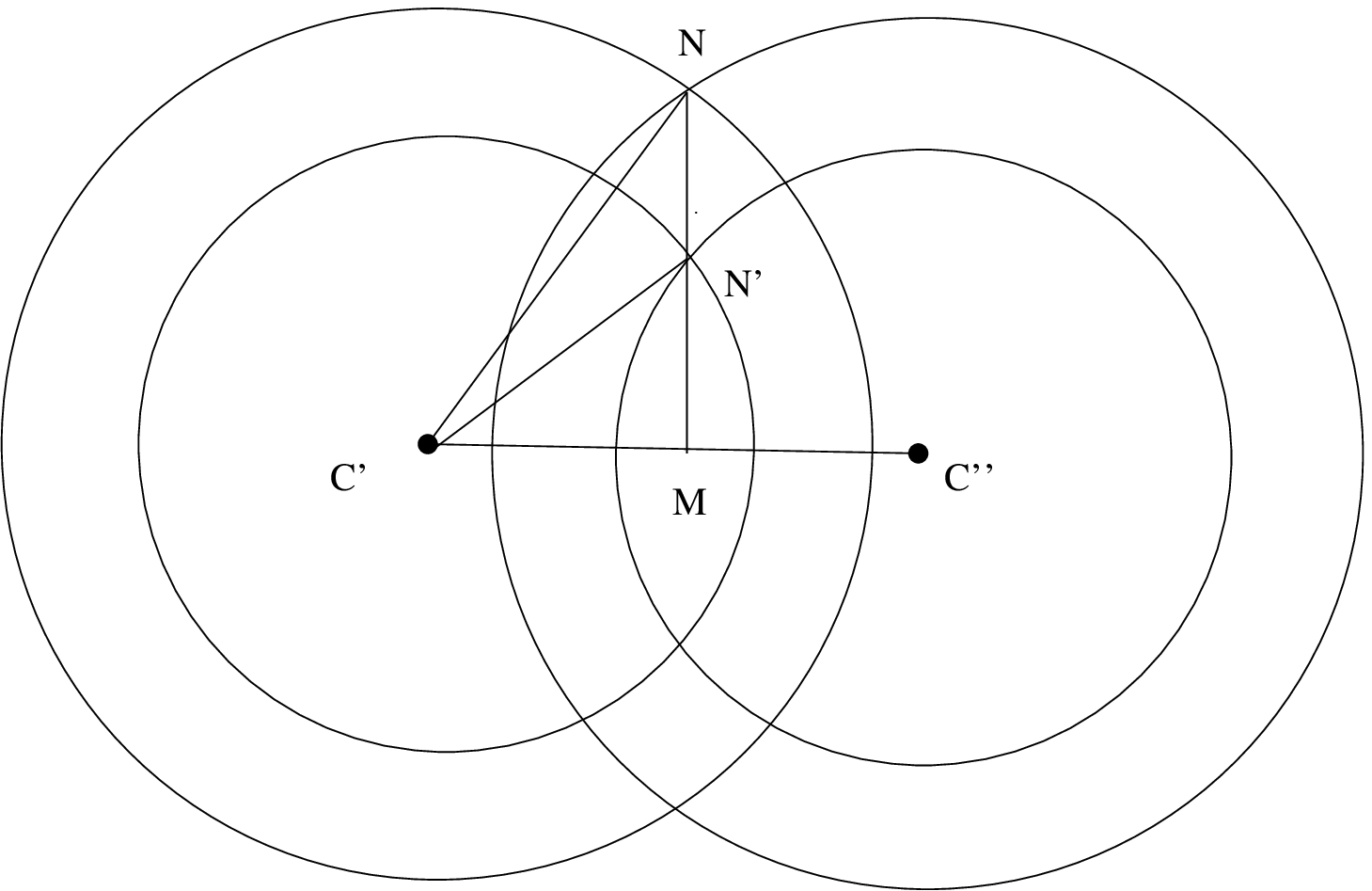}}
    ~~
    \subfloat[$\angle C'TT'=\pi/2$]{\label{f:fig2}\includegraphics[width=0.27\textwidth]{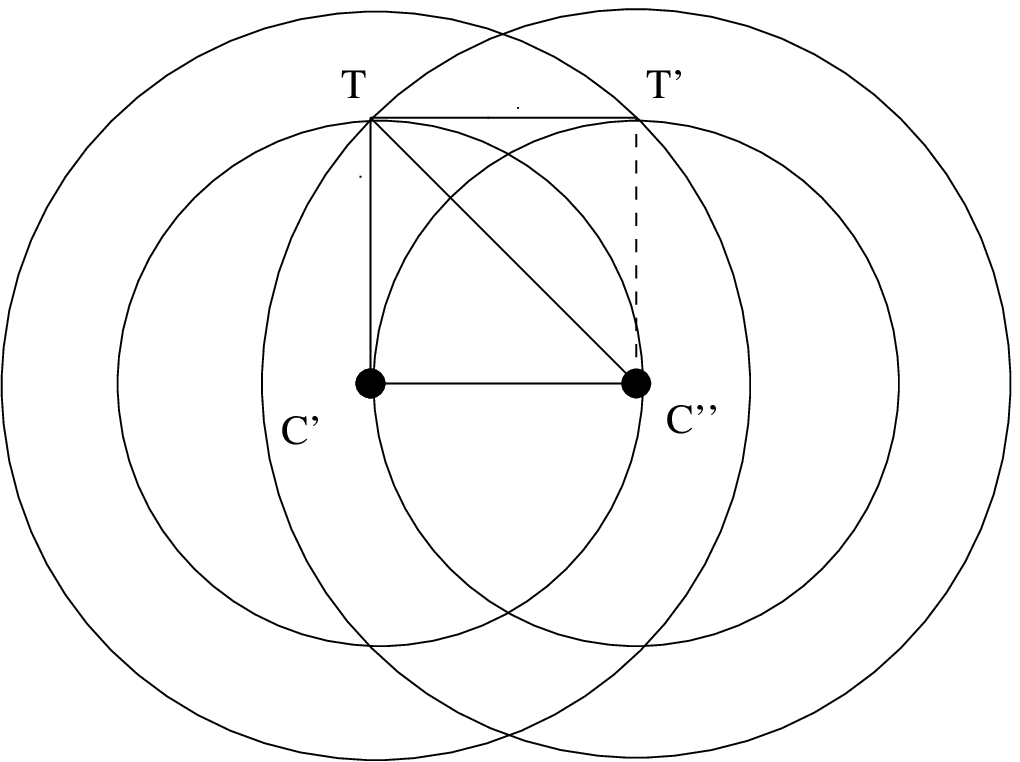}}
    ~~
    \subfloat[$TT'<r$ when $\angle C'TT'>\pi/2$]{\label{f:fig3}\includegraphics[width=0.31\textwidth]{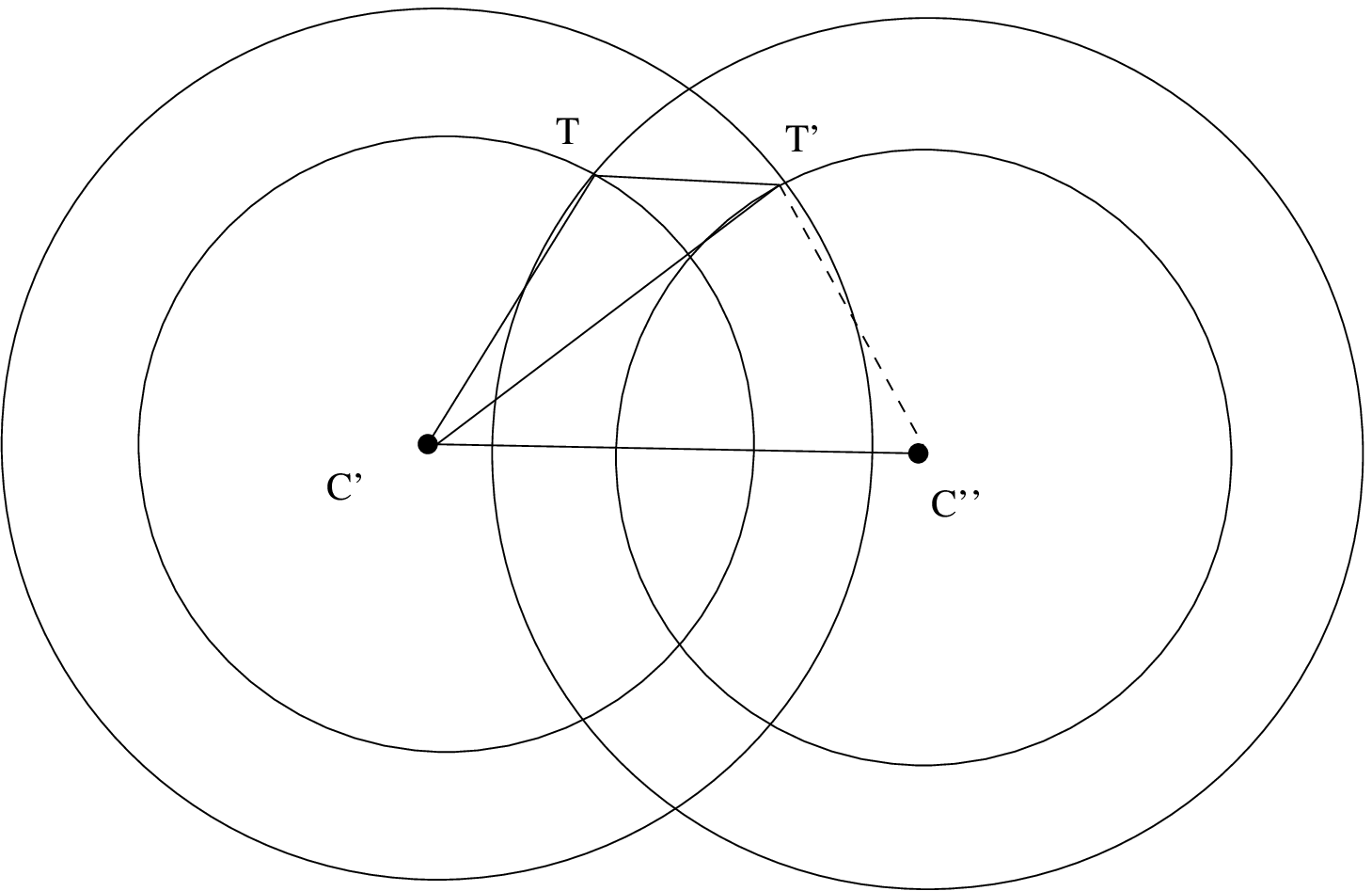}}
\caption{Illustration of case 1}\label{f:figCase1}
\end{figure}

From the figure, $NN'=\sqrt{r^2-(l/2)^2}-\sqrt{(r-u)^2-(l/2)^2} < r$.

Now consider the angle $\angle C'TT'$ in Figure \ref{f:fig2}. Since $TT'$ is parallel
with $C'C''$, so $\angle C'TT'=\pi/2$ implies $\angle TC'C''=\pi/2$. Then from triangle
$\triangle C'C''T$, $(C'T)^2+(C'C'')^2=(C''T)^2$. Since the minimum value of $C'C''$ is $r-u$,
so, $(C'T)^2+(C'C'')^2=(C''T)^2$ implies $\sqrt2(r-u)=r$, which is same as $u=(\sqrt 2-1)r/\sqrt 2$.
Therefore, $\angle C'TT'=\pi/2$ implies $u=(\sqrt 2-1)r/\sqrt 2$.
Hence, $u<(\sqrt 2-1)r/\sqrt 2=r/3.5$ implies $\angle C'TT'>\pi/2$.
\begin{figure}[h]
\psfrag{C'}{$C'$}
\psfrag{C''}{$C''$}
\psfrag{T}{$T$}
\psfrag{T'}{$T'$}
\psfrag{N}{$N$}
\psfrag{N'}{$N'$}
\centering
\includegraphics[width=0.5\textwidth]{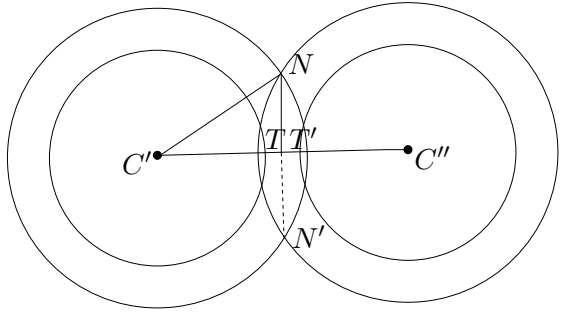}
\caption{Illustration of case 2}\label{f:fig4}
\end{figure}

From triangle $\triangle C'TT'$ in Figure \ref{f:fig3}, we can say, $TT'<r$ when $\angle C'TT'>\pi/2$, i.e.,
$TT'<r$ when $u<r/3.5$. Hence,
\begin{eqnarray}\label{eq:1}
  \max(TT'/2, NN'/2) &<& r/2~ \text{when}~ u<r/3.5
\end{eqnarray}
\item[Case 2 ($l>2r-2u$):] 
From Figure \ref{f:fig4}, $NN'=2\sqrt{r^2-(l/2)^2}$. This is maximum when
$l$ is minimum, i.e., $l=2r-2u$. Now, $NN'<r$ if $2\sqrt{r^2-(r-u)^2}<r$, i.e.,
$NN'<r$ if $u<(2-\sqrt 3)r/2=r/7.5$. Again from Figure \ref{f:fig4}, $TT'<2u$. So, $TT'<r$ if $u<r/2$.
Hence,
\begin{eqnarray}\label{eq:2}
  \max(TT'/2, NN'/2) &<& r/2~ \text{when}~ u<r/7.5
\end{eqnarray}
\end{description}

From the above two inequalities \ref{eq:1} and \ref{eq:2}, for $l\geq (r-u)$, the error bound of the scheme \cite{Lee2009}
is $r/2$ if $u<\min(r/3.5, r/7.5)$, i.e., error is less than $r/2$ if $u<r/7.5$.
\end{proof}

\begin{theorem}
\label{the2:localz}
If an anchor completes its movement along the LRH around a sensor $Q$, then all other sensors lying inside the circle of radius $3r/2$ centering at $Q$ can be localized with error less than $r/2$ for suitable beacon distance $u$,
if $Q$ has been localized within $r/2$ error.
\end{theorem}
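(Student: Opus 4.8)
I would reduce the statement to Theorem~\ref{the1:errbd}. Fix a sensor $S$ with $0<|SQ|\le 3r/2$. It suffices to show that during one full traversal of the LRH the sensor $S$ marks at least two beacon points $C',C''$ with $l:=|C'C''|\ge r-u$, because Theorem~\ref{the1:errbd} then certifies a position estimate for $S$ with error below $r/2$, while the remaining ambiguity between the two candidate positions ``$Q$'' and ``$Q'$'' of Figure~\ref{f:fig7} is settled exactly as argued just before Theorem~\ref{the1:errbd}: $S$ reconstructs the LRH from the vertices carried in the beacons, and the two candidates would have produced different sets of received / missed beacons around the hexagon, so at most one is consistent with $S$'s record.

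\smallskip
For the main case, $S$ on or outside the LRH: since the hexagon is convex, the part $\Gamma$ of its boundary lying within distance $r$ of $S$ is a single connected sub-polyline, and its endpoints $E_1,E_2$ are the two points where the communication circle of $S$ crosses the LRH. The beacon points $S$ actually records are the first and last beacons of the burst received along $\Gamma$, hence lie within one beacon distance $u$ of $E_1$ and of $E_2$; thus $l\ge |E_1E_2|-2u$ (a little more, in fact, exploiting the $120^\circ$ interior angle of the hexagon), and it is enough to show $|E_1E_2|\ge r+u$. By the six-fold symmetry of the LRH I would restrict the direction of $S$ from $Q$ to a $30^\circ$ wedge, observe that for a fixed direction $|E_1E_2|$ decreases as $|SQ|$ increases (so $|SQ|=3r/2$ is the worst distance), and then minimise $|E_1E_2|$ over the wedge at $|SQ|=3r/2$. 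At the wedge boundaries the computation is elementary: towards a vertex of the LRH one finds $|E_1E_2|=\tfrac{\sqrt3(\sqrt{13}-1)}{4}\,r\approx 1.13\,r$, and towards an edge midpoint $|E_1E_2|\approx 1.19\,r$; a short monotonicity argument rules out smaller values in between. Hence $|E_1E_2|\ge r+u$ for $u$ below the threshold of Theorem~\ref{the1:errbd}, which gives $l\ge r-u$ as needed.

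\smallskip
The remaining sensors $S$ lie strictly inside the LRH; since the LRH has circumradius $r$, these are automatically neighbours of $Q$. Now $\Gamma$ can wind almost all the way around the hexagon, so the ``two ends of the burst'' argument does not apply directly. Here I would use that for every $S\ne Q$ the vertex $V$ of the LRH farthest from $S$ satisfies $|SV|>r$ (a short averaging computation: the mean of $|SV_k|^2$ over the six vertices equals $|SQ|^2+r^2>r^2$), which forces a genuine gap in $S$'s reception on the far side of the hexagon and therefore two beacon points straddling that gap; I would then verify, again via the wedge reduction, that this pair is at distance $\ge r-u$ --- or, when that gap is too short, fall back on the break created by the anchor entering and leaving the hexagon at the start vertex $A$. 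Finally, the hypothesis that $Q$ is known only up to error $r/2$ merely translates the LRH by at most $r/2$ relative to the true $Q$; this enlarges the region one must cover but can be absorbed into the choice of a ``suitable'' (sufficiently small) $u$. The main obstacle I anticipate is precisely this uniform span estimate: the beacon-point pair has to clear the threshold $r-u$ simultaneously for every admissible position of $S$ and every admissible displacement of the hexagon, and the interior case --- where the usable gap can be small --- is the delicate part.
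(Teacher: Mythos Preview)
Your exterior case (sensor on or outside the LRH) matches the paper's Case~2 almost exactly: the paper also reduces by the six-fold symmetry to a single sector, argues that the chord $|E_1E_2|$ is shortest when the sensor sits on the ray through a vertex at distance $3r/2$ (their point $G$), and computes $|E_1E_2|=\sqrt{3}\,\dfrac{\sqrt{13}-1}{4}\,r$ there, from which the threshold $u\le r/5$ drops out. Your cruder bound $l\ge |E_1E_2|-2u$ would give a slightly smaller admissible $u$, but the $120^\circ$ refinement you allude to recovers exactly the paper's inequality $C'C''=\sqrt{3}\bigl(L-u\bigr)\ge r-u$.

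Where you diverge --- and where your plan is weakest --- is the interior case. The paper's Case~1 is a one-line observation you are missing: if $S$ lies in the equilateral triangle $QED$ of side $r$, then $|SE|<r$ and $|SD|<r$, so the communication circle of $S$ contains the entire edge $DE$ and spills past $D$ onto $CD$ and past $E$ onto $EF$; hence two beacon points lie on these adjacent edges and are at distance $l\ge |DE|=r>r-u$. That disposes of the interior instantly. Your route through the farthest vertex $V$ and a possible ``fallback'' to the start vertex $A$ is both unnecessary and, as you yourself flag, genuinely delicate: when the gap near $V$ is short the straddling beacon points can be close, and the $A$-break gives two coincident points if the traversal starts and ends at the same $A$ inside range.

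Finally, your last paragraph over-reads the hypothesis. The LRH is drawn around the \emph{computed} position $Q$, and the conclusion is about sensors within $3r/2$ of that same $Q$; nothing gets translated. The $r/2$-error hypothesis is really only cashed in by the subsequent Theorem~\ref{thm3:nbrslclztn}, which uses it to place every true neighbour of the sensor inside this $3r/2$-disc. There is no extra slack to ``absorb into $u$''.
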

\begin{proof}
In Figure \ref{f:fig5}, $Q$ is the calculated position of a sensor and the anchor moves along LRH around $Q$.
Here the LRH is $ABCDEF$.
\begin{figure}[h]
\psfrag{A}{$A$}
\psfrag{B}{$B$}
\psfrag{C}{$C$}
\psfrag{D}{$D$}
\psfrag{C'}{$C'$}
\psfrag{C''}{$C''$}
\psfrag{E}{$E$}
\psfrag{F}{$F$}
\psfrag{G}{$G$}
\psfrag{H}{$H$}
\psfrag{J}{$J$}
\psfrag{P'}{$P'$}
\psfrag{P''}{$P''$}
\psfrag{Q'}{$Q'$}
\psfrag{Q}{$Q$}
    \centering
    \subfloat[Position of the sensor is within the $\triangle QDE$]{\label{f:fig5a}\includegraphics[width=0.4\textwidth]{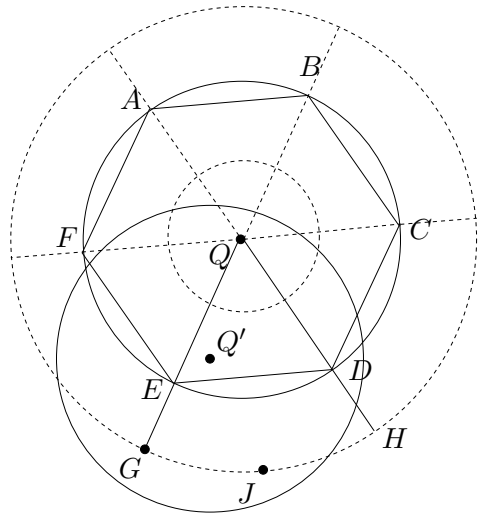}}
     ~~~~~~~~~
    \subfloat[Position of the sensor is within the region bounded by $GE$, $ED$, $DH$ and the arc $HJG$]{\label{f:fig5b}\includegraphics[width=0.35\textwidth]{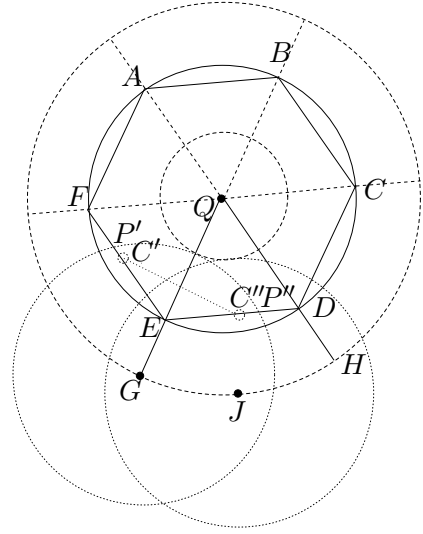}}
\caption{Position of a sensor within the sector $QGH$}\label{f:fig5}
\end{figure}
Due to localization error, $Q$ may lie anywhere within the smallest dotted circle in Figure \ref{f:fig5}
with radius $r/2$ centering at $Q$.
We prove that if a sensor lies anywhere within the largest dotted circle with radius $3r/2$ centering at $Q$
then it marks at least  two beacon points such that distance between them is at least $(r-u)$, where $u$ is the beacon
distance. Length of each side of the regular hexagon $ABCDEF$ is $r$ and $QA=QB=QC=QD=QE=QF=r$. The anchor starts
its movement from a vertex of the hexagon and broadcasts beacon maintaining beacon distance $u=r/k$ for some integer $k$,
which ensures that the anchor broadcasts beacon at every vertex of the LRH along with some other points on the LRH.

We divide the largest dotted circle shown in Figure \ref{f:fig5} in six symmetric sectors by
joining each vertex of the hexagon with $Q$ and then extending the line up to the boundary of the largest circle.
So any sensor lies within the largest circle must lie in any one of the six symmetric sectors. If a sensor lies
on common boundary on any two sectors then it can be considered within any of the sector.
Without loss of generality, let a sensor lie in the sector bounded by the line segments $GQ$, $QH$ and
by the arc $HJG$, where $J$ is the mid-point of $GH$ on the arc. We further divide the sector in two parts i.e., $\triangle QED$ and the region $GEDHJG$ bounded by $GE$, $ED$, $DH$ and the arc $HJG$, where $\triangle QED$ defines the triangle with vertices $Q, E$ and $D$. There are two cases based on the position of a sensor belonging to $\triangle QED$ or region $GEDHJG$.
\begin{description}
  \item[Case 1.] 
  $Q'$ is the location of sensor inside $\triangle QED$ as shown in Figure \ref{f:fig5a}, where $\triangle QED$ is an equilateral triangle with side length $r$. Then irrespective of the position of $Q'$, $Q'E<r$, $Q'D<r$. Hence the communication circle of $Q'$ intersects $CD$ and $EF$/$FA$ or intersects $EF$ and $CD$/$CB$. So distance $l$ between two beacon points is at least $r$ i.e., $l>(r-u)$. So $Q'$ can be localized within $r/2$ error using Theorem \ref{the1:errbd}.
  \item[Case 2.] $Q'$ is the location of sensor inside the region $GEDHJG$. The distance $l$ between the beacon points decreases, if position of $Q'$ moves away from the center $Q$ of the LRH. As any point on the arc $GJH$ is furthest from $Q$, hence, it is sufficient to show $l>(r-u)$ when $Q'$ lies on the arc $GJH$. Due to the symmetric nature of the arcs $JG$ and $JH$, it is sufficient to show for positions of $Q'$ on arc $JG$. Let position of $Q'$ is at $J$. Since $JE=JD<r$, circle with radius $r$ centering at $J$ intersects $CD$ and $EF$ as shown in Figure \ref{f:fig5b}. Hence $Q'$ marks two beacon points such that $l\geq r >(r-u)$.
      One can easily verify that as position of $Q'$ changes along the arc $JG$ towards $G$, distance $l$ between two beacon points decreases and $l$ is least when $Q'$ is at $G$. So, it is sufficient to show that $l>(r-u)$ when $Q'$ is at $G$.
      As shown in Figure \ref{f:fig5b}, communication circle of $Q'$, i.e., the circle centering at $G$, intersects $DE$ and $EF$ at $P'$ and $P''$ respectively i.e., $GP'=GP''=r$. Let $C'$, $C''$ be the beacon points considering the worst case such that $P''C''=P'C'= u$.
      We can write $EC'=EP'-u$ and $EC''=EP''-u$. Let $EP'=EP''=L$. Then from triangle $\triangle GEP''$ of Figure \ref{f:fig5b},
      $L^2+(r/2)^2-2L(r/2)\cos (2\pi/3)=r^2$. Solving this equation, we get $L=(\sqrt 13-1)r/4$. Now from  $\triangle EC'C''$,
      $C'C''=\sqrt{(L-u)^2+(L-u)^2-2(L-u)^2\cos (2\pi/3)}$, i.e., $C'C''=\sqrt 3\left((\sqrt 13-1)(r/4)-u\right)$. So, $C'C'' \geq (r-u)$ if $\sqrt 3((\sqrt 13-1)r/4-u)\geq r-u$, i.e., $u\leq (\sqrt 3(\sqrt 13-1)-4)r/(4(\sqrt 3-1))$, i.e., $u\leq r/5$.
\end{description}
So, we can conclude that for $u\leq r/5$, all the sensors which lie within the circle of radius $3r/2$ and centering at $Q$, can be localized with error less than $r/2$.
\end{proof}

\begin{corollary}
\label{cor:cor}
If $u\leq r/7.5$, all the sensors which lie within the circle of radius $3r/2$ centered at $Q$,
can be localized with error less than $r/2$ after mobile anchor completes its movement along the LRH around the sensor $Q$.
\end{corollary}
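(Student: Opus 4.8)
The plan is to obtain the corollary directly from Theorem~\ref{the1:errbd} and Theorem~\ref{the2:localz}, the only arithmetic input being that $r/7.5 < r/5$. First I would note that the hypothesis $u \leq r/7.5$ forces $u \leq r/5$, so Theorem~\ref{the2:localz} applies as it stands: once the anchor has traversed one full loop of the LRH around $Q$ (with $Q$ itself already localized to within $r/2$, as in the statement of Theorem~\ref{the2:localz}), every sensor lying inside the circle of radius $3r/2$ about $Q$ marks at least two beacon points $C'$, $C''$ whose separation $l = C'C''$ satisfies $l \geq (r-u)$. As in the proof of Theorem~\ref{the2:localz}, one takes $u$ of the form $r/k$ for an integer $k$ so that the anchor genuinely broadcasts at every vertex of the LRH as well as at intermediate points; any $k \geq 8$ gives $u \le r/7.5$ and keeps us inside the admissible range.

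Next I would feed the conclusion $l \geq (r-u)$ into Theorem~\ref{the1:errbd}, which asserts that whenever a sensor has two beacon points at distance $l \geq (r-u)$, the localization error of the scheme~\cite{Lee2009} is below $r/2$ provided $u < r/7.5$ (indeed the Case~1/Case~2 analysis in that proof yields the bound up to $u = r/7.5$ with the non-strict inequalities used there). Since $u \le r/7.5$ by hypothesis, this applies to each of the sensors identified in the previous step, so each of them is localized with error less than $r/2$. Chaining the two theorems gives the corollary.

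The only place needing a little care — and the nearest thing to an obstacle — is reconciling the three constraints on $u$ that appear: Theorem~\ref{the2:localz} wants $u \leq r/5$, Theorem~\ref{the1:errbd} wants $u < r/7.5$, and the anchor's broadcast rule wants $u = r/k$. Choosing $u = r/k$ with $k \geq 8$ simultaneously satisfies all three, so I would read the corollary's hypothesis $u \le r/7.5$ as shorthand for such a choice; after that the conclusion is immediate and no new geometric argument is required.
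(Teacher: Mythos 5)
Your proposal is correct and matches the paper's own argument, which is simply ``Follows from Theorem~\ref{the1:errbd} and Theorem~\ref{the2:localz}'': you chain Theorem~\ref{the2:localz} (which needs $u \le r/5$, implied by $u \le r/7.5$) to guarantee two beacon points with $l \ge r-u$, and then Theorem~\ref{the1:errbd} to get the $r/2$ error bound. Your added remarks about reconciling the strict versus non-strict inequality at $u = r/7.5$ and the $u = r/k$ broadcast convention are reasonable housekeeping that the paper glosses over, but they do not change the route.
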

\begin{proof}
Follows from Theorem \ref{the1:errbd} and Theorem \ref{the2:localz}.
\end{proof}

\begin{theorem}
\label{thm3:nbrslclztn}
If a mobile anchor completes its movement along the LRH around a sensor which is localized within $r/2$ error, then all its neighbors
are localized.
\end{theorem}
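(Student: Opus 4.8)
The plan is to reduce Theorem~\ref{thm3:nbrslclztn} to Theorem~\ref{the2:localz} (equivalently Corollary~\ref{cor:cor}) by a short triangle-inequality argument that bridges the gap between the \emph{true} location of the reference sensor and the \emph{calculated} location $Q$ around which the LRH is constructed. No new geometric computation should be needed: the whole content is the observation that the $r$-neighborhood of the true position is contained in the $3r/2$-neighborhood of $Q$.

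First I would fix notation. Let $P$ denote the actual position of the reference sensor, and let $Q$ be the position estimated earlier by the scheme, i.e.\ the center of the LRH the anchor has just traversed. By hypothesis this sensor has been localized within $r/2$ error, so $|PQ| < r/2$. Recall next that a sensor $N$ is, by definition, a neighbor of the reference sensor exactly when it lies inside the communication circle of $P$, that is $|PN| \le r$. Combining these two facts with the triangle inequality gives $|QN| \le |QP| + |PN| < r/2 + r = 3r/2$, so every neighbor $N$ of the reference sensor lies strictly inside the circle of radius $3r/2$ centered at $Q$.

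The conclusion then follows at once. Since the anchor has, by hypothesis, completed one full traversal of the LRH around $Q$, Theorem~\ref{the2:localz} (or Corollary~\ref{cor:cor}, for the concrete choice $u \le r/7.5$) applies to every sensor inside the radius-$3r/2$ disk about $Q$; in particular it applies to every neighbor $N$, which therefore marks at least two beacon points separated by distance $l \ge r - u$ and is localized with error less than $r/2$ via Theorem~\ref{the1:errbd}. Hence all neighbors of the reference sensor are localized, which is the assertion of the theorem.

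The one subtlety I would be careful about --- and essentially the only nontrivial step --- is the point just made: the neighbor relation is defined with respect to the true position $P$, whereas the LRH is built around the estimate $Q$, and it is precisely the assumed $r/2$ error bound on $Q$ that lets the $r$-ball about $P$ be swallowed by the $3r/2$-ball about $Q$ to which Theorem~\ref{the2:localz} applies. Everything else is a direct invocation of the earlier results, so the proof is essentially one line of arithmetic plus a citation.
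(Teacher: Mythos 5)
Your proposal is correct and follows exactly the same route as the paper: the paper's proof is the same one-line reduction to Theorem~\ref{the2:localz}, observing that neighbors of a sensor localized within $r/2$ error lie inside the radius-$3r/2$ circle about the estimated position. You merely make explicit the triangle-inequality step ($|QN|\le|QP|+|PN|<r/2+r=3r/2$) that the paper leaves implicit, which is a reasonable clarification but not a different argument.
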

\begin{proof}
If  a sensor is localized within $r/2$ error, its neighbors lie within the circle of radius $3r/2$ centering the sensor.
Hence the statement follows by Theorem \ref{the2:localz}.
\end{proof}

\section{Distributed Algorithm for Path Planning}
\label{sec:algomodel}
We assume sensors form a connected network. The number of sensors in the network is not an input of our algorithm.
Each sensor has unique id and knows the id of its one hop neighbors. Set of neighbors of a sensor $i$ is denoted by $nbd(i)$.
We define {\it NLN-degree}$(i)$ as the number of non-localized neighbors of a sensor $i$. Initially NLN-degree$(i)=|nbd(i)|$, where
$|nbd(i)|$ is the cardinality of the set $nbd(i)$. We assume at the beginning of localization, a sensor $i$ is localized itself by
random movement of the mobile anchor such that the localization error is within $r/2$, where $r$ is the communication range. This can be done by choosing beacon points such that they are at least $r-u$ distance apart, where $u$ is the beacon distance. After localization, $i$
broadcasts its position.
Whenever the mobile anchor hears the position of $i$, it starts moving along the sides of the LRH of $i$. Computation of LRH is explained in section \ref{sec:tech1}. Hence each $j\in nbd(i)$ localizes within the error bound $r/2$ by Theorem \ref{thm3:nbrslclztn} and broadcasts the position to $nbd(j)$. After receiving position of a neighbor $j\in nbd(k)$, sensor $k$ updates NLN-degree$(k)$ by NLN-degree$(k)-1$. During the process of completion of LRH, each sensor which receives a neighbor's position, keeps updating its NLN-degree. So, when all neighbors of sensor $i$ are localized at the end of a LRH movement, each $j\in nbd(i)$ has computed their NLN-degree$(j)$. After completing LRH, the anchor moves $r/2$ distance towards $i$. Moving $r/2$ distance towards $i$ is required to ensure communication between the anchor and $i$ because the positional error is bounded by $r/2$ according to Theorem \ref{the2:localz}. Then anchor sends a message to $i$ for its next destination of movement.
To do so, $i$ sends a request to all $j\in nbd(i)$ for NLN-degree$(j)$ along with their positions. Then $i$ selects
a neighbor sensor $j'$ that achieves the value $\max\{$NLN-degree$(j) | j\in nbd(i)\}$ and sends position of $j'$ with NLN-degree$(j')$ to the anchor for next destination of movement. The anchor moves to the closest point of the communication circle of $j'$ and starts the LRH movement around $j'$ to continue localization. The mobile anchor maintains a STACK along with its operations PUSH, POP and variable TOP with usual dynamic stack data structure during its travel through the connected network.
Initially the STACK is empty. Before making LRH movement around a sensor $i$, the anchor PUSH id $i$ on the STACK. If $\max\{$NLN-degree$(j) | j\in nbd(i)\}=0$ then anchor POP $i$ from the STACK. When the STACK becomes empty, the algorithm terminates, otherwise anchor revisits the communication circle of the sensor $i'$ (say), which is at the TOP of the STACK and then sends a message to $i'$ for its next destination of movement.
Mobile anchor decides its path according to the Algorithm \ref{alg:ALG}: \textsc{HexagonalLocalization}.
\begin{algorithm}[]
\caption{\textsc{HexagonalLocalization}}
\begin{algorithmic}[1]
                    \STATE Mobile anchor localizes a sensor by its random movement then PUSH id of the sensor into the STACK.
                    \STATE \label{lin2} Computes LRH centering at the sensor whose id $i$ is at the TOP of the STACK and broadcasts beacons periodically with period $t$ until the LRH movement completes.
                    \STATE \label{lin3} The anchor moves $r/2$ distance towards $i$ and sends a message to $i$ for next destination of its movement.
                    \STATE Sensor $i$ sends a message to all $j\in nbd(i)$ for NLN-degree$(j)$ along with their positions.
                    \STATE On receiving the replies, sensor $i$ selects
                        a neighbor sensor $j'$ that achieves the value $\max\{$NLN-degree$(j) | j\in nbd(i)\}$ and sends position of $j'$ with NLN-degree$(j')$ to the anchor for next destination of movement.
                    \STATE If NLN-degree$(j')>0$ then the anchor PUSH $j'$ into the STACK and moves to the closest point of the communication circle of $j'$ and executes step \ref{lin2}, otherwise POP from the STACK.
                    \STATE The algorithm terminates if STACK is empty, otherwise the anchor revisits the sensor whose id is at the TOP of the STACK and executes step \ref{lin3}.
\end{algorithmic}
\label{alg:ALG}
\end{algorithm}

\subsection{Correctness and complexity analysis}
\begin{theorem}
Algorithm \ref{alg:ALG} ensures localization of all sensors in a connected network.\label{th:correctness}
\end{theorem}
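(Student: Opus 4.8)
The plan is to argue that Algorithm \ref{alg:ALG} performs a depth-first traversal of the connectivity graph of the sensor network, and that every sensor gets localized at the moment the anchor completes an LRH movement around one of its neighbors (or around itself). First I would set up the invariant that whenever the anchor begins an LRH movement around a sensor $i$, the sensor $i$ has already been localized within $r/2$ error: this holds for the very first sensor by the initialization step (random movement with beacon points at least $r-u$ apart), and it holds inductively for every subsequent sensor $j'$ because $j'$ is chosen from $nbd(i)$ for the current top-of-stack sensor $i$, and by Theorem \ref{thm3:nbrslclztn} (via Theorem \ref{the2:localz}) all neighbors of $i$ — including $j'$ — were localized within $r/2$ error when the LRH movement around $i$ completed. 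Hence the precondition of Theorem \ref{thm3:nbrslclztn} is met at every stack push, so each LRH movement around a sensor $i$ localizes all of $nbd(i)$.

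Next I would show termination and exhaustiveness via the stack discipline. The key observation is that a sensor's id is pushed onto the STACK only once: a sensor is pushed when it is selected as the max-NLN-degree neighbor with positive NLN-degree, and once pushed it is eventually popped when $\max\{\text{NLN-degree}(j)\mid j\in nbd(i)\}=0$; since NLN-degree counts are monotonically non-increasing and every neighbor selection strictly localizes a previously non-localized sensor, the process on each stack frame makes progress. This mirrors standard DFS on a graph where each vertex enters the stack at most once, giving termination. For exhaustiveness I would use connectivity: suppose some sensor $w$ is never localized. Let $v$ be a localized sensor adjacent to $w$ on a shortest path from the initial sensor to $w$ (such a $v$ exists because the initial sensor is localized and the graph is connected, so the "frontier" between localized and non-localized sensors is nonempty and has an edge across it). But $v$ becomes localized only at the completion of an LRH movement around some neighbor $i$ of $v$; at that point $i$ is pushed onto the stack path leading to $v$... more directly: once $v$ is localized and has a non-localized neighbor, $v$'s NLN-degree is positive, so the traversal cannot pop past $v$'s stack frame without an LRH movement around $v$, which by Theorem \ref{thm3:nbrslclztn} localizes $w$ — contradiction.

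I expect the main obstacle to be making the exhaustiveness argument fully rigorous, in particular pinning down precisely when each sensor is pushed and establishing that the algorithm does not terminate (empty stack) while some sensor remains non-localized. The subtlety is that a sensor $v$ may be localized without ever itself being pushed onto the stack (it could be localized as a neighbor of several pushed sensors), so one must argue that \emph{if} $v$ has a non-localized neighbor at any point, $v$ will in fact be pushed: this follows because when the anchor completes the LRH around the sensor $i$ that first localized $v$, sensor $i$ queries all of $nbd(i)$ for NLN-degrees, and the DFS will eventually exhaust all of $i$'s other options and consider $v$ (if $v$'s NLN-degree is still positive then) — here one needs that NLN-degrees are recomputed/updated correctly so that the max-selection at $i$ will pick up $v$ before popping $i$. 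I would formalize this by induction on DFS stack depth or on distance from the initial sensor, using the NLN-degree update rule (step decrementing NLN-degree$(k)$ on hearing a neighbor's position) to guarantee that a sensor with a still-non-localized neighbor always presents a strictly positive NLN-degree to the anchor and is therefore pushed before its parent is popped.
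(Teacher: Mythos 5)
Your proposal is correct and follows essentially the same route as the paper's proof: a connectivity-based contradiction in which a localized sensor that still has a non-localized neighbor keeps a positive NLN-degree and therefore blocks its parent from being popped, so the STACK cannot become empty while any sensor remains unlocalized. You additionally make explicit two points the paper leaves implicit --- the invariant that every pushed sensor is itself localized within $r/2$ (so that Theorem \ref{thm3:nbrslclztn} applies at each push) and termination via each sensor being pushed at most once --- which tightens rather than changes the argument.
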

\begin{proof}
We prove correctness of the algorithm by  method of contradiction. Let us assume that one sensor $i$, (say), is not localized but the algorithm terminates, i.e., the STACK becomes empty. Since all sensors are in a connected network, then there exist at least one sensor $j$, (say), which is a neighbor of $i$ and is localized. According to the algorithm, $j$ localizes itself by marking beacon points on a LRH movement of the anchor around one of its neighbors, say $k$. At this moment $k$ is the TOP of the STACK. As $i$ is not localized, $j$ would send non zero NLN-degree$(j)$ and its position to $k$. Since other sensors are localized, NLN-degree$(j)$ is the maximum among those received by $k$. Hence $k$ should send the position of $j$ to the anchor for the next destination of movement according to algorithm \ref{alg:ALG}. Whenever anchor visits $j$ and makes the LRH movement, $i$ becomes localized. So $k$ cannot be popped without pushing $j$ which implies localization of $i$. Hence $k$ is in the STACK until $i$ is localized. It contradicts our assumption that the STACK is empty but $i$ is not localized. Hence proved.
\end{proof}
\begin{theorem}
The time complexity of Algorithm \ref{alg:ALG} is $O(|V|+|E|)$.\label{th:timecomplexity}
\end{theorem}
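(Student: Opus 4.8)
The plan is to show that Algorithm~\ref{alg:ALG} is, in essence, a depth-first traversal of the network graph $G=(V,E)$, where $V$ is the set of sensors and $E$ is the set of one-hop neighbour pairs, and then to charge the total work to the vertices and edges of $G$ in the standard way. First I would argue that the STACK behaves exactly like a DFS stack: the first sensor is pushed once in step~1, and a sensor $j'$ is pushed (step~6) only when it is selected in step~5 with NLN-degree$(j')>0$. Once a sensor is popped, the maximum NLN-degree over its neighbours was $0$, hence \emph{all} its neighbours are already localized; since localization is permanent its own NLN-degree stays $0$ forever, so it can never again be selected in step~5 with a positive value and is never pushed again. Therefore each sensor is pushed at most once and popped at most once, giving at most $|V|$ PUSH/POP operations, at most $|V|$ executions of the LRH movement of step~2 (one per push), and at most $|V|$ ``revisits'' in step~7 (one per pop). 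Combined with Theorem~\ref{th:correctness}, this also confirms the traversal terminates.

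Next I would bound the remaining work, namely the message exchanges of steps~4--6 and the NLN-degree maintenance. When a sensor is localized it broadcasts its position once; each such broadcast is received by its neighbours, so the total number of position messages — and likewise the number of NLN-degree decrements they trigger — is $\sum_i |nbd(i)| = 2|E|$. It remains to bound the queries of steps~4--5, where the sensor $i$ currently at the TOP scans its neighbours for their NLN-degrees and picks a maximizer. A vertex $i$ may become the TOP several times: once just after its own LRH movement and once each time control returns to it after one of its DFS-children is popped; but the number of such children is at most $\deg(i)$, and — reading each scan as resuming from the last unexamined neighbour rather than restarting from the beginning of $nbd(i)$ — each incident edge of $i$ participates in a NLN-degree query only a constant number of times over the whole run. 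Summing over all vertices gives $O\big(\sum_i \deg(i)\big)=O(|E|)$ for the total query work, and adding the $O(|V|)$ contribution from the first paragraph yields the claimed $O(|V|+|E|)$.

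The step I expect to be the main obstacle is precisely this last charging argument: making rigorous that the aggregate cost of the repeated neighbour-scans in steps~4--5 is linear in $|E|$ and not, say, $\sum_i \deg(i)^2$. This is the familiar subtlety in the textbook analysis of DFS, and the cleanest resolution is to observe that each edge $(i,j)$ is ``used'' by $i$ for a NLN-degree query at most once — after that query $j$ is localized and need never be queried for this purpose again — so the successive scans partition, up to a constant factor, the incident edges of $i$ across the $O(\deg(i)+1)$ occasions on which $i$ is the TOP. Once this is in place the bound follows; I would close by noting that the physical travel of the anchor (LRH movements plus revisits) adds only $O(|V|)$ further steps, so it does not affect the asymptotics.
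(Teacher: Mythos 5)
Your proof is correct, but it measures a different quantity than the paper does, so the two arguments are genuinely different routes to the same bound. The paper's proof interprets ``time'' as the distance traveled by the anchor: it bounds the number of LRH movements by $|V|-1$ (each contributing the fixed perimeter $6r$, hence $O(|V|)$) and the transit between hexagons by $2r(|E|-1)$ (hence $O(|E|)$), with a somewhat informal worst-case argument that the line graph forces the most LRH movements; it does not count messages or per-sensor computation at all. You instead analyze the algorithm as a distributed computation: each sensor is pushed and popped at most once (your argument that a sensor's NLN-degree drops permanently to zero after its own LRH movement, so it can never again be selected with a positive value, is a cleaner justification of the ``at most $|V|$ LRH movements'' bound than the paper's line-graph heuristic), and the message and query work is charged to edges. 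Your approach buys rigor on a point the paper silently ignores: as literally written, step~4 has the TOP sensor $i$ broadcast to \emph{all} of $nbd(i)$ each of the up to $\deg(i)+1$ times it becomes TOP, which gives $O\bigl(\sum_i \deg(i)^2\bigr)$ messages, not $O(|E|)$; your ``resume the scan from the last unexamined neighbour'' reading is a genuine (and necessary) amendment if one wants linear message complexity, whereas the paper's travel-distance accounting is unaffected by this issue. Conversely, the paper's approach directly bounds the physical path length, which is the quantity the rest of the paper cares about; your closing remark that travel adds only $O(|V|)$ ``steps'' is correct but is essentially the paper's entire proof compressed into one sentence. Both arguments establish $O(|V|+|E|)$, and yours is the more careful one for the operation-counting interpretation of the theorem.
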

\begin{proof}
To analyze complexity of Algorithm \ref{alg:ALG}, we calculate maximum travel distance of the mobile anchor to localize all sensors in any connected graph $G=(V,E)$ topology. Here $V$ and $E$ are the set of vertices corresponding to the sensors and the set of edges of $G$ respectively. If degree of the graph decreases then the anchor needs to make more LRH movements. The line graph achieves lowest degree among connected graphs and hence the anchor attends maximum LRH movements to localize all sensors. In the worst case, for a line graph our algorithm matches with DFS visit of $G$. In this case anchor has to make LRH movement around $|V|-1$ sensors if it initiates its movement from one end of the graph. Total LRH movement is $6r(|V|-1)$ since each LRH movement equals to perimeter $6r$ of LRH, where $r$ is the communication range of the sensors. In addition to this, anchor has to move maximum $2r(|E|-1)$ distance to reach other sensors and then to return to the initiator.  Hence both time complexity and complexity in terms of distance traveled by the anchor are same and equal to $O(|V|+|E|)$.
\end{proof}
\section{Path Planning for Rectangular Region}
\label{sec:tech2}
In this section we describe path planning of a mobile anchor, which covers a given rectangular region to ensure localization of all sensors deployed
on the region. Let $ABCDEF$ be a regular hexagon with side length $r$ and center $O$ as shown in  Figure \ref{f:fig8}.
If communication circle of a sensor touches the hexagon then we say that the sensor is within the coverage area of the hexagon.
Obviously the area inside the hexagon is a part of the coverage area.  As shown in Figure \ref{f:fig8}, $A'B'C'D'E'F'$ is a regular hexagon with side $2r$ and center $O$. The communication circles of the sensors located at the vertices of the larger hexagon touches the smaller hexagon at one vertex. Communication circle of any sensor which lies within or on the larger hexagon except at the vertices, intersects the smaller hexagon at at least two points.
\begin{figure}[h]
\psfrag{O}{$O$}
\psfrag{A}{$A$}
\psfrag{M}{\hspace{-1mm}$M$}
\psfrag{M'}{\hspace{-1mm}$M'$}
\psfrag{B}{$B$}
\psfrag{C}{$C$}
\psfrag{D}{$D$}
\psfrag{E}{$E$}
\psfrag{F}{$F$}
\psfrag{A'}{$A'$}
\psfrag{B'}{$B'$}
\psfrag{C'}{$C'$}
\psfrag{D'}{$D'$}
\psfrag{E'}{$E'$}
\psfrag{F'}{$F'$}
\psfrag{A''}{$A''$}
\psfrag{B''}{\hspace{-1mm}$B''$}
\psfrag{C''}{$C''$}
\psfrag{D''}{$D''$}
\psfrag{E''}{$E''$}
\psfrag{F''}{$F''$}
  \centering
    \subfloat[The hexagon $A'B'C'D'E'F'$ is the coverage area when mobile anchor moves along the hexagon  $ABCDEF$]{\label{f:fig8}\includegraphics[width=0.35\textwidth]{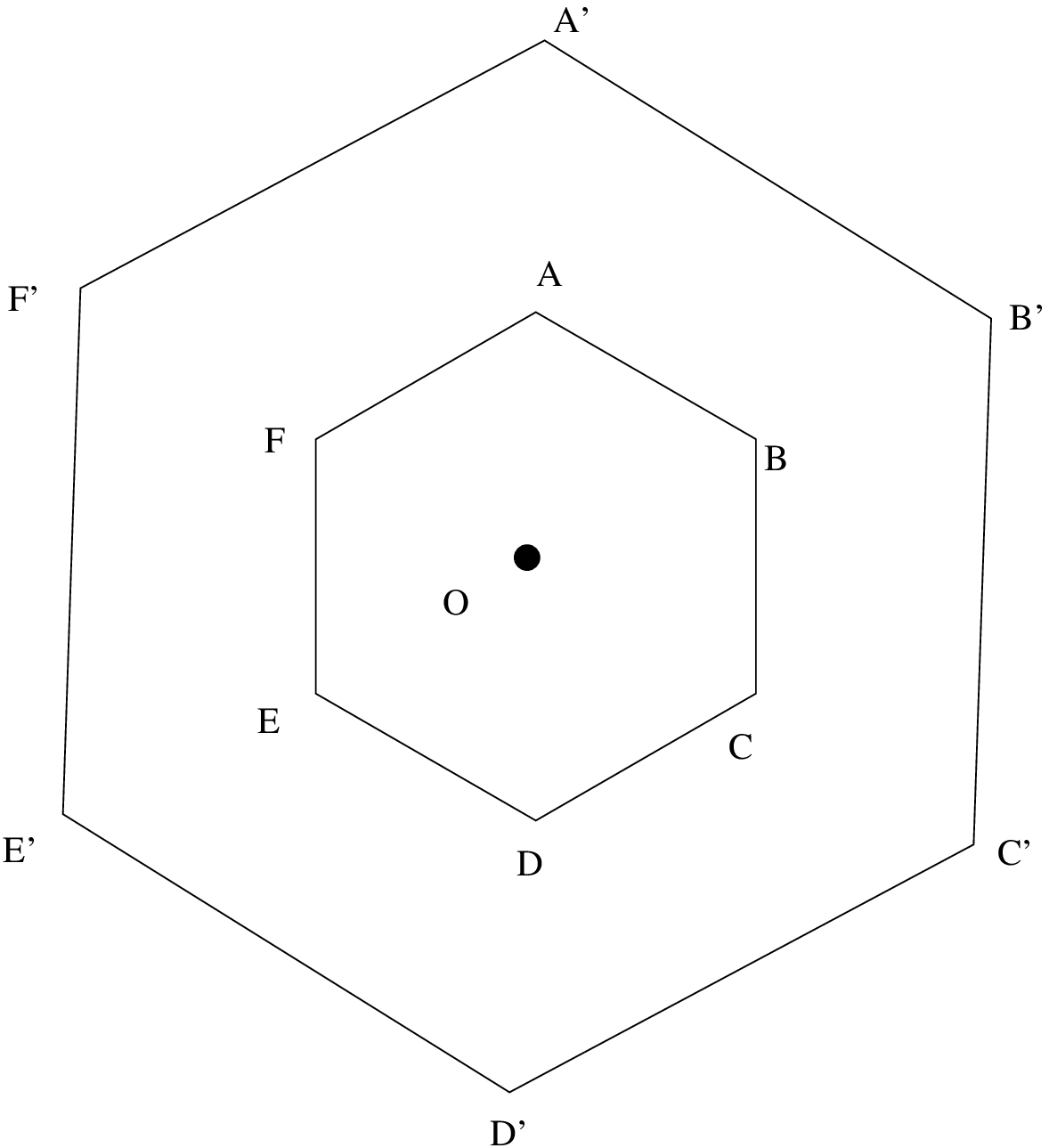}}
     ~~~~~~~~~
    \subfloat[The coverage area $A''B''C''D''E''F''$ ensures two beacon points on $ABCDEF$]{\label{f:fig9}\includegraphics[width=0.4\textwidth]{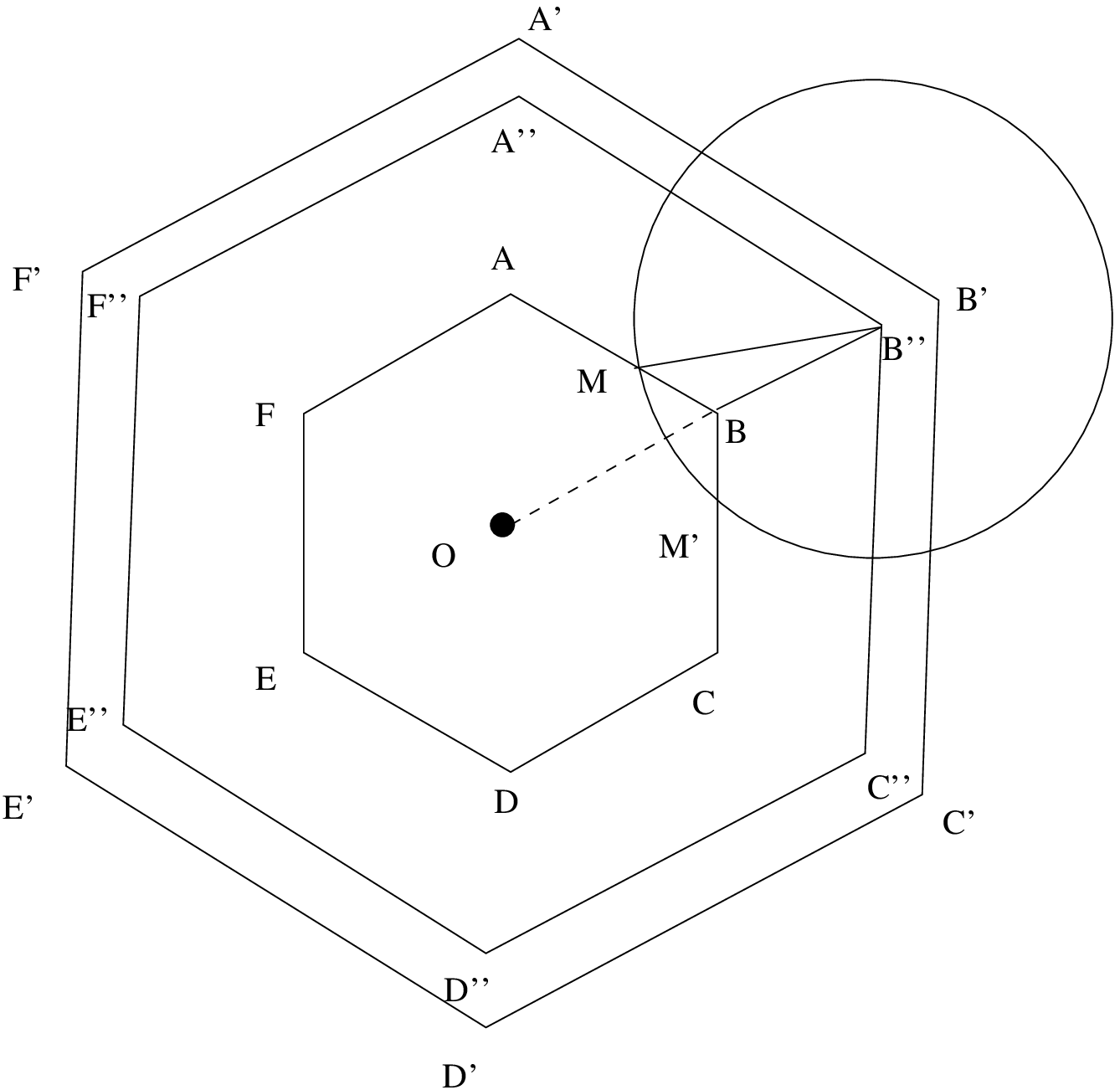}}
\caption{Showing coverage area when a mobile anchor moves along the regular hexagon $ABCDEF$}
\end{figure}
So $A'B'C'D'E'F'$ forms a coverage area of $ABCDEF$.
Let the anchor moves along $ABCDEF$ and broadcasts beacons periodically with time period $t$. For localization with the scheme \cite{Lee2009}, each sensor should mark at least two beacon points on $ABCDEF$. To ensure marking of two beacon points, we reduce the coverage area by reducing the hexagon $A'B'C'D'E'F'$ to $A''B''C''D''E''F''$ with side $2r-X$ as shown in  Figure \ref{f:fig9}. Now, we have to find a suitable value of $X$ such that all sensors located inside $A''B''C''D''E''F''$ can be localized.
As the vertices of $A''B''C''D''E''F''$ are the furthest points from $ABCDEF$, so if we find $X$ in such a way that a sensor, which lies at any vertex of $A''B''C''D''E''F''$, marks at least two beacon points, then so does all other sensors, which lies on or inside $A''B''C''D''E''F''$ for that same $X$.

Let $BM'=u$ as shown in Figure \ref{f:fig9}. Then the sensor at $B''$ marks two beacon points at $B$, $M'$, if $B''M'=r$.
Let $BB''=r-X$. From the triangle  $\triangle BMB''$, $(r-X)^2+u^2-2(r-X)u\cos (2\pi/3)=r^2$, which implies $X=r+u/2-\frac{\sqrt{4r^2-3u^2}}{2}$.
Approximately $X=u/2$ as $u<<r$. Hence, if mobile anchor moves along a regular hexagon of side $r$, then all the sensors which lie within a regular
hexagon with side $2r-X$, mark at least two beacon points, where $X\geq r+u/2-\frac{\sqrt{4r^2-3u^2}}{2}$.

As shown in Figure \ref{f:fig10}, the mobile anchor moves along the blue solid lines starting from $A$ to cover the rectangle $R_1R_2R_3R_4$.
The movement terminates at $Z$.  The directions of movement are shown by red dashed arrows. We now compute the path length of the movement. According to Figure \ref{f:fig10}, each hexagon with blue solid lines and side $r$ covers one larger hexagon with black dotted lines whose width is equal to \\ $H_2H_4=\sqrt{(2r-X)^2+(2r-X)^2-2(2r-X)^2\cos (2\pi/3)}=\sqrt3(2r-X)$.
\begin{figure}[h]
\begin{center}
\psfrag{Z}{$Z$}
\psfrag{R}{$R$}
\psfrag{A}{$A$}
\psfrag{G}{$G$}
\psfrag{H}{$H$}
\psfrag{1}{\hspace{1mm}$_1$}
\psfrag{2}{\hspace{1mm}$_2$}
\psfrag{3}{\hspace{1mm}$_3$}
\psfrag{4}{\hspace{1mm}$_4$}
\includegraphics[width=0.80\textwidth]{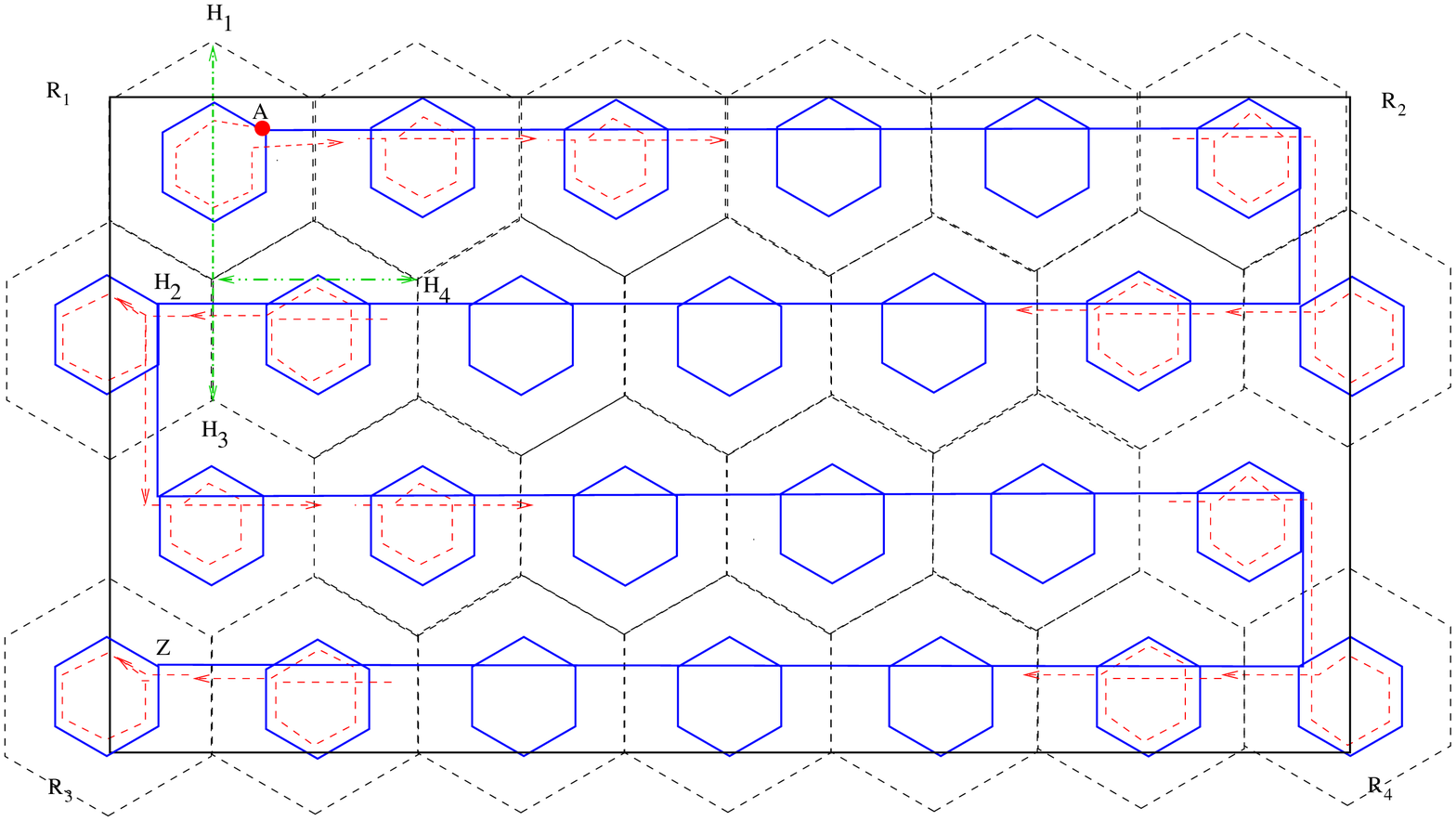}
\caption{\label{f:fig10} Showing path planning to cover a rectangular region $R_1R_2R_3R_4$}
\end{center}
\end{figure}
If $L$ is the width of the rectangle, then the number of hexagon in a row would be $\left\lceil{\frac{L}{\sqrt3(2r-X)}}\right\rceil$.
Two rows of hexagon cover an area with height equal to $H_1H_3=H_1H_2+H_2H_3=3(2r-X)$. Hence, if $L$ is the hight of
the rectangle, then the number of rows of hexagon would be $\left\lceil{\frac{L}{3(2r-X)/2}}\right\rceil$. Then total number of
hexagon with side $r$ required to cover a $L\times L$ rectangle is $\left\lceil{\frac{L}{\sqrt3(2r-X)}}\right\rceil\left\lceil{\frac{2L}{3(2r-X)}}\right\rceil$.
Total path length required to reach all the hexagons in a row by moving from one hexagon to another hexagon is at most $L$.
So, total distance traversed to connect the hexagons for all rows is equal to $L\times$ no. of rows, i.e., $L\left\lceil{\frac{2L}{3(2r-X)}}\right\rceil$.
Total path length required to reach all the rows by moving from one row to another row is at most $L$ again.
In each pair of consecutive rows, an extra hexagon is needed i.e., an extra $3r$ movement is required in each row on an average.
So, total path traversed to cover $L\times L$ rectangle is equal to

$D_{Hexagon}=\left\lceil{\frac{L}{\sqrt3(2r-X)}}\right\rceil\left\lceil{\frac{2L}{3(2r-X)}}\right\rceil6r+L\left\lceil{\frac{2L}{3(2r-X)}}\right\rceil+
3r\left\lceil{\frac{2L}{3(2r-X)}}\right\rceil+L$.

\subsection{Comparison with existing schemes}

Theoretical comparison with existing schemes \cite{Huang2007,Koutsonikolas2007,Chia-Ho-Ou2013} in terms of path length of mobile anchor
is presented in this section. Let $L$ be the length of the sides of a square region and $r$ be the communication range of sensors and anchor.
To normalize all the expressions of path length given in \cite{Chia-Ho-Ou2013}, we have taken $X=r/k$ for some integer $k>1$.
We have denoted path length of mobile anchor in \cite{Chia-Ho-Ou2013} by $D_{Chia-Ho-Ou}$. Path lengths of mobile anchor in different schemes in \cite{Koutsonikolas2007} are denoted by $D_{Scan}$, $D_{Doublescan}$ and $D_{Hilbert}$. Path lengths of mobile anchor in different schemes in \cite{Huang2007} are denoted by $D_{Circles}$ and $D_{S-curves}$. We have denoted the path length of mobile anchor of our proposed scheme by $D_{Hexagon}$. So, path lengths of different schemes according to \cite{Chia-Ho-Ou2013} are given by,

$D_{Chia-Ho-Ou}=(L+2r)\left(\left\lceil\left(\frac{L+2r}{r-r/k}\right)\right\rceil+1\right)+(r-r/k)\left\lceil\left(\frac{L+2r}{r-r/k}\right)\right\rceil$. This is same as $D_{Scan}$.

$D_{Doublescan}=2\left[\left(\frac{L+r+r/k}{2(r-r/k)}+1\right)(L+2r)+L+r+r/k\right]$.

$D_{Hilbert}=\left(\frac{L+2r}{r-r/k}\right)^2(r-r/k))$.

$D_{Circles}=N^2 \pi(r-r/k)$+L, where $N=\frac{L/sqrt(2)-r}{(r-r/k)}$.

Expression of $D_{Circles}$ is modified in such a way that it can also cover the corner points of the rectangle, instead of the largest circle inscribed within the rectangle.

$D_{S-curves}=\left[\left(\frac{L+r+r/k}{3r/2}+1\right)\frac{L+2r}{2}\pi\right]+(L+r+r/k)+(r-r/k)\pi/2$.\\
The path length for our proposed scheme is

$D_{Hexagon}=\left\lceil{\frac{L}{\sqrt3(2r-r/k)}}\right\rceil\left\lceil{\frac{2L}{3(2r-r/k)}}\right\rceil6r+L\left\lceil{\frac{2L}{3(2r-r/k)}}\right\rceil+3r\left\lceil{\frac{2L}{3(2r-r/k)}}\right\rceil+L$.

In all the above expressions, the highest order term is $L^2/r$. So, we have compared the coefficients of $L^2/r$ as shown in Table \ref{table:1} to decide minimality of path length theoretically.

\begin{table}[h]
\center
\caption{Coefficients of higher order term $L^2/r$ in the expressions of different schemes}
\label{table:1}
\begin{tabular}{c|c|c|c|c|c}
\hline
\hline
$D_{Hexagon}$ & $D_{Chia-Ho-Ou}$        & $D_{Doublescan}$ & $D_{Hilbert}$      & $D_{S-curves}$     &$D_{Circles}$   \\
&&&&&\\
\hline
$\frac{1}{\sqrt 3}\left(\frac{2k}{2k-1}\right)\left(\frac{2k}{2k-1}+\frac{1}{\sqrt 3}\right)$ & $\frac{k}{k-1}$ & $\frac{k}{k-1}$  & $\frac{k}{k-1}$    & $\frac{\pi}{3}\left(\frac{k}{k-1}\right)$ & $\frac{\pi}{2}\left(\frac{k}{k-1}\right)$\\
\hline
\hline
\end{tabular}
\end{table}

In $D_{Chia-Ho-Ou}$, $D_{Doublescan}$, and $D_{Hilbert}$, the coefficient of higher order term $L^2/r$ is equal to $\frac{k}{k-1}$.
For $D_{S-curves}$, the coefficient of $L^2/r$ is $\frac{\pi}{3}\left(\frac{k}{k-1}\right)$. The coefficient of $L^2/r$ in $D_{Circles}$ is $\frac{\pi}{2}\left(\frac{k}{k-1}\right)$.
The coefficient of $L^2/r$ in $D_{Hexagon}$ is $\frac{1}{\sqrt 3}\left(\frac{2k}{2k-1}\right)\left(\frac{2k}{2k-1}+\frac{1}{\sqrt 3}\right)$.
We show that coefficient of $L^2/r$ in $D_{Hexagon}$ is lesser than all the above mentioned coefficients for all $k$.
Among the existing path planning schemes $Circles$ does best in terms of path length to cover a circular region. But for a square region, larger circles are needed to cover the corners of the square, which increases path length and makes $Circkes$ a bad strategy for covering a square.
Since $\frac{\pi}{2}\left(\frac{k}{k-1}\right)$>$\frac{\pi}{3}\left(\frac{k}{k-1}\right)$>$\frac{k}{k-1}$ for all $k>1$, so if we can show
$\frac{k}{k-1}$>$\frac{1}{\sqrt 3}\left(\frac{2k}{2k-1}\right)\left(\frac{2k}{2k-1}+\frac{1}{\sqrt 3}\right)$ for all $k$, then we can claim the minimality of path length of our scheme as the coefficient of the highest order term $L^2/r$ of our scheme is least among all the existing ones.
One can easily verify that $\frac{k}{k-1}$>$\frac{1}{\sqrt 3}\left(\frac{2k}{2k-1}\right)\left(\frac{2k}{2k-1}+\frac{1}{\sqrt 3}\right)$ for $2\leq k \leq9$. Again, for all $k \geq 10$, $\frac{1}{\sqrt 3}\left(\frac{2k}{2k-1}\right)\left(\frac{2k}{2k-1}+\frac{1}{\sqrt 3}\right)<1$ but $\frac{k}{k-1}>1$. Hence for any $r$, there exists $L$ such that path length of $D_{Hexagon}$ is minimum.

\section{Simulation Results}
\label{sec:sim}
We have used MATLAB platform to study the performances of our proposed schemes. We have randomly generated a connected graph of
sensors in a 50 meter $\times$ 50 meter square region. According to the section \ref{sec:tech1}, we have taken values of beacon
distance $u < r/7.5$. Following Figure \ref{f:fig11} and Figure \ref{f:fig12} show the hexagonal movement path (red solid lines)
of a mobile anchor through the network during localization, where black bold lines are transition path between LRHs.
The blue crosses and red circles are the actual and calculated positions of the sensors respectively and pairs are joined by dotted lines.
The number of sensors $n$, communication ranges $r$, path length $D$ and average positioning error are given in the caption of each figure.
All the values showing in the tables and figures are in meter. As the area remains fixed and connectivity is maintained by increasing
communication range of sensors and the anchor, path length does not decrease much with the number of sensors.
\begin{figure}[h]
\centering
\subfloat[$n=25$, $r=20$, $D=966$, Average error=1.42]{\label{f:fig11a} \includegraphics[width=0.5\textwidth]{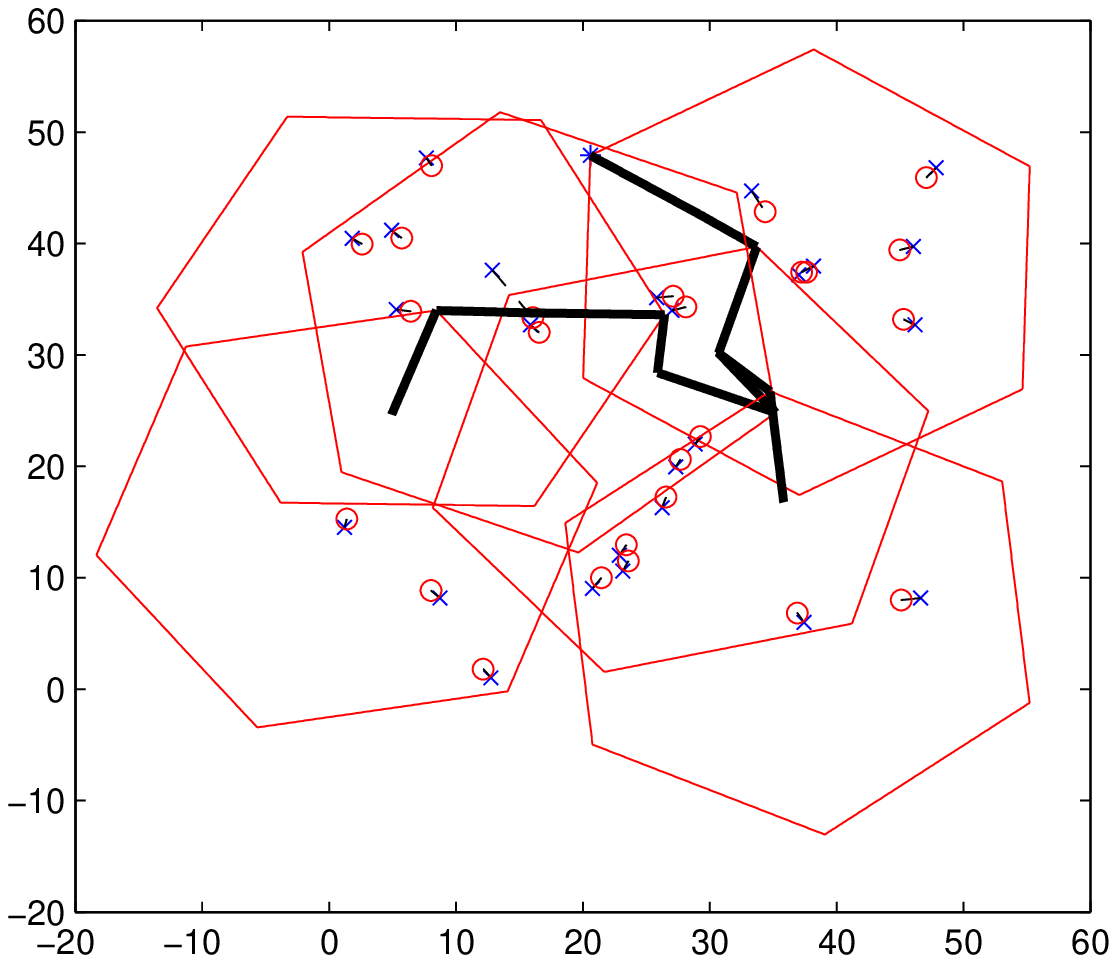}}
\subfloat[$n=50$, $r=15$, $D=1097$, Average error=1.09]{\label{f:fig11b} \includegraphics[width=0.55\textwidth]{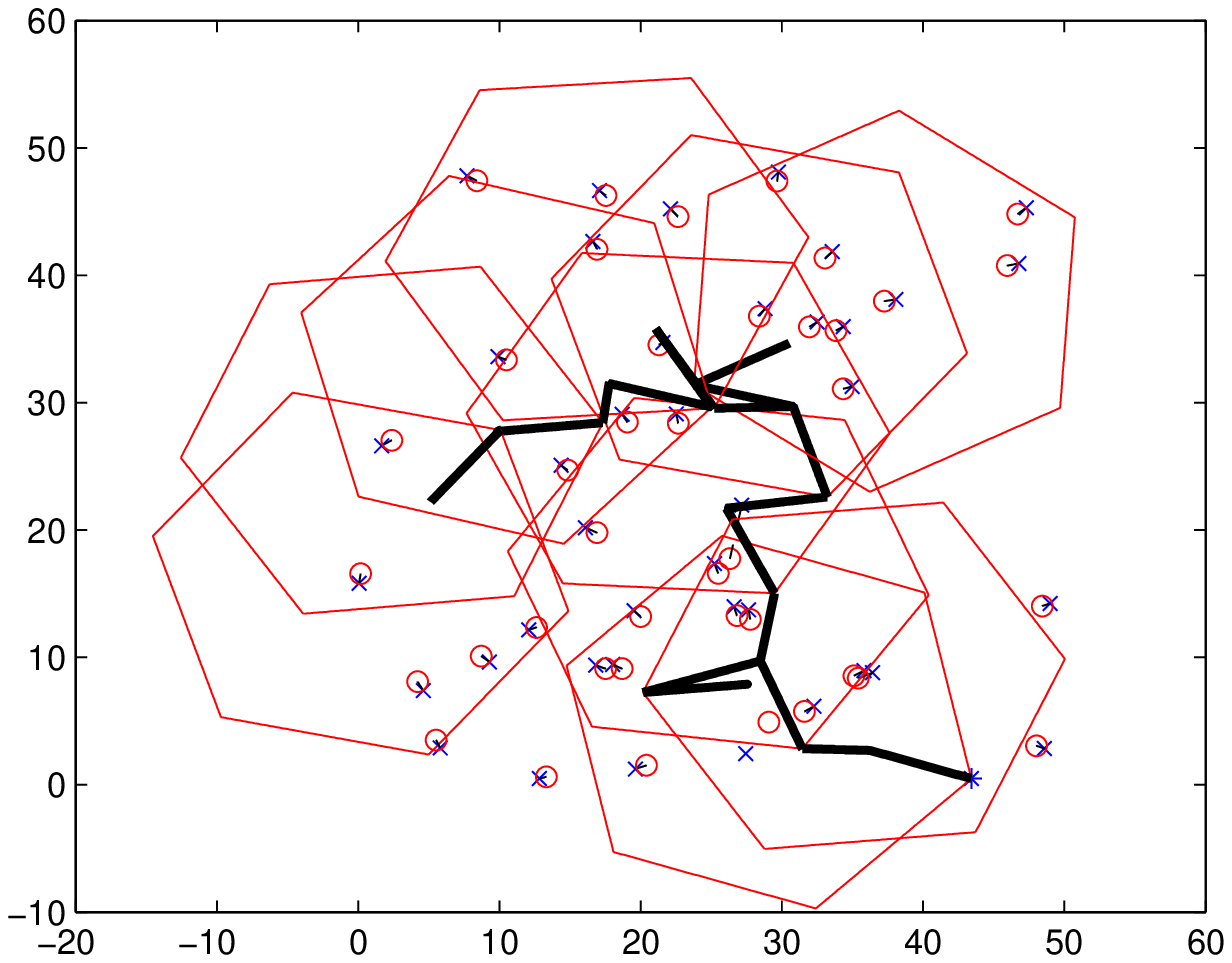}}
\caption{Hexagonal movement pattern in a connected network}
\label{f:fig11}
\end{figure}
\begin{figure}[h]
\centering
\subfloat[$n=75$, $r=12$, $D=1158$, Average error=0.92]{\label{f:fig12a} \includegraphics[width=0.5\textwidth]{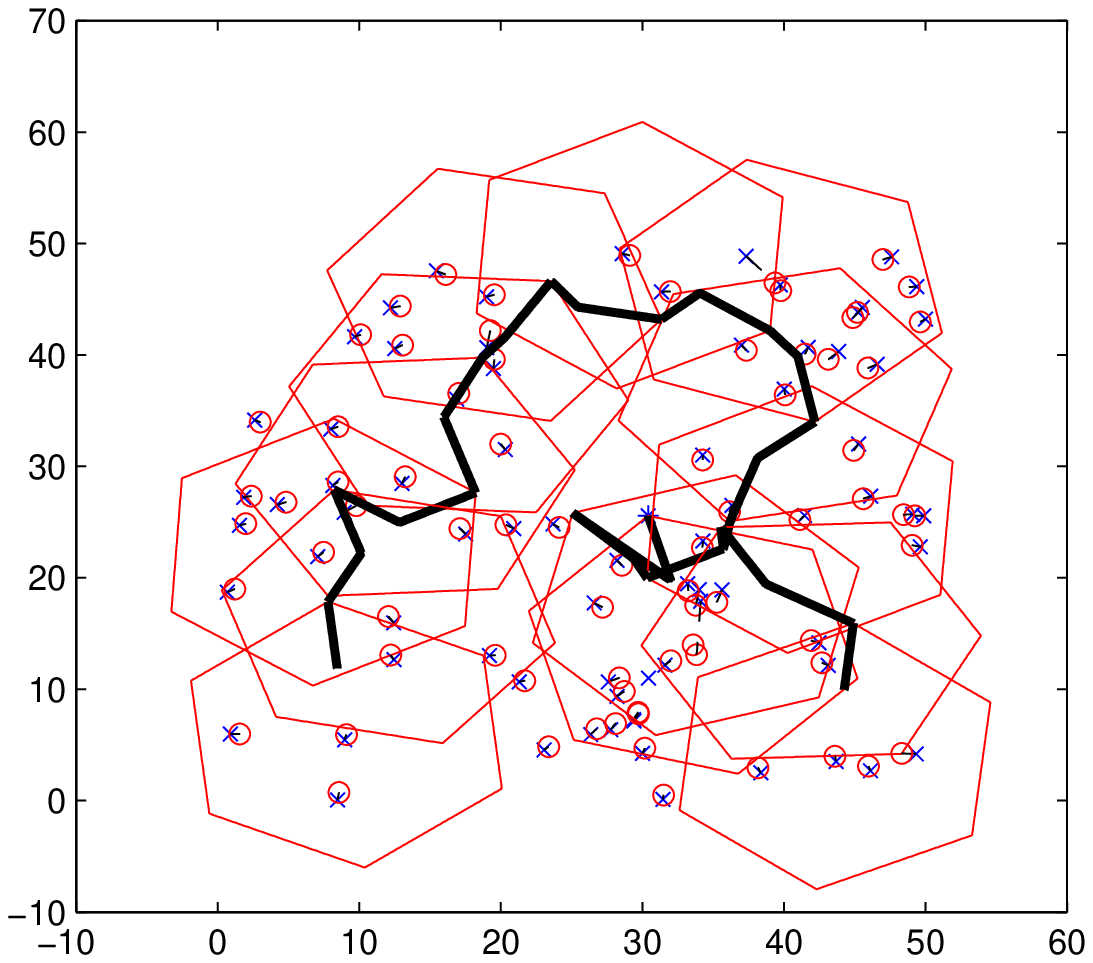}}
\subfloat[$n=100$, $r=10$, $D=1481$, Average error=0.76]{\label{f:fig12b} \includegraphics[width=0.53\textwidth]{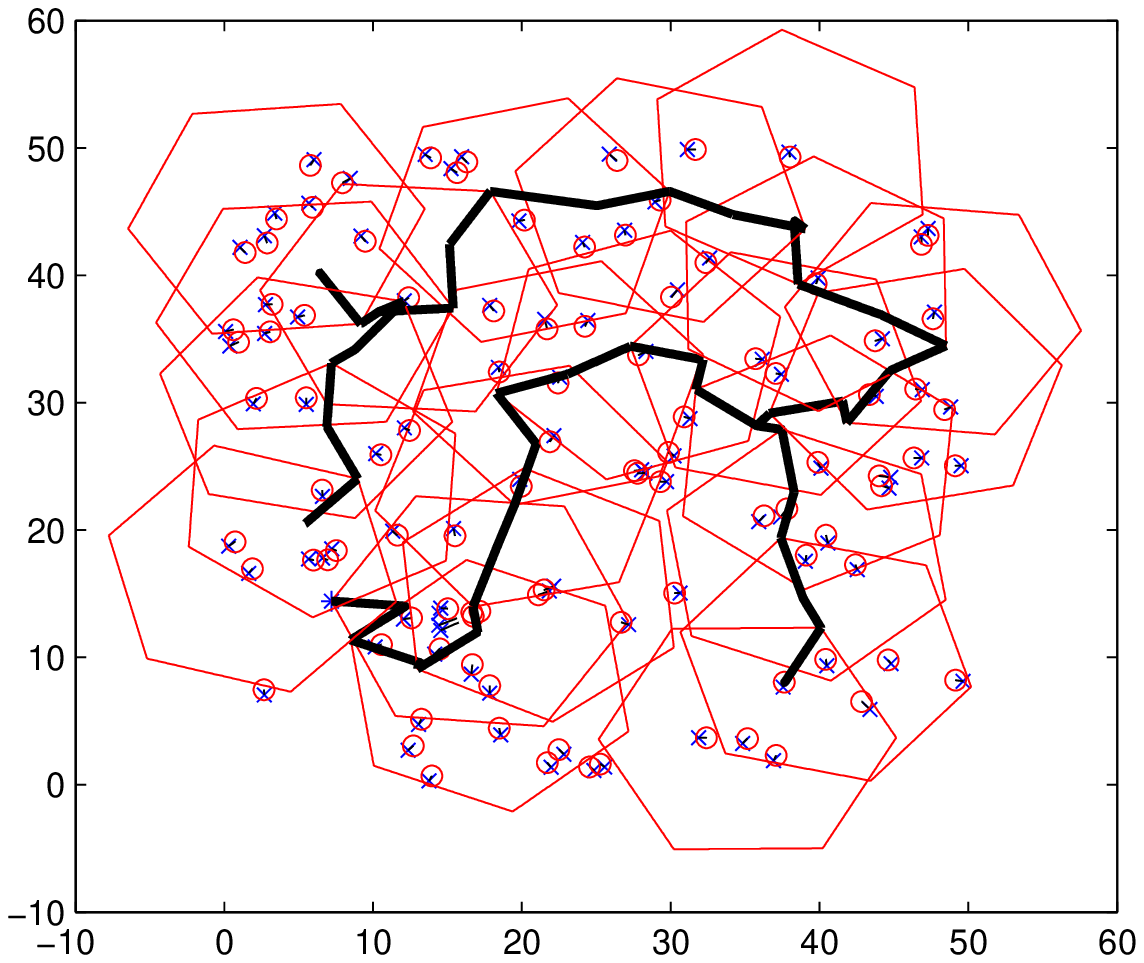}}
\caption{Hexagonal movement pattern in a connected network}
\label{f:fig12}
\end{figure}
\begin{table}[]
\centering
\caption{Showing average error (in meter) for different communication range and beacon distance \label{table:5}}
\begin{tabular}
{p{4.5cm}|p{1.2cm}|p{1.2cm}|p{1.2cm}|p{1.2cm}|p{1.2cm} }
\hline
\hline
Beacon distance $\rightarrow$  & $r/10$ &$r/15$  &$r/20$  &$r/25$  &$r/30$  \\
Communication range($\downarrow$)  &&&&&\\
\hline
10 &1.47  & 1.10 & 0.61 & 0.42 & 0.33  \\
15 &2.35  & 1.36 & 1.01 & 0.73 & 0.54 \\
20 &2.74  & 1.90 & 1.28 & 0.93 & 0.62 \\
25 &4.35  & 2.38 & 1.53 & 1.22 & 0.94 \\
30 &5.89  & 3.14 & 2.12 & 1.64 & 1.13 \\
\hline
\hline
\end{tabular}
\end{table}

In Table \ref{table:5}, we have shown localization error for different communication ranges $r$ and beacon distances $u$.
According to Table \ref{table:5}, localization error decreases as beacon distance decreases for any fixed communication range.
Here, $u=r/k$, where $k>7.5$ according to Theorem \ref{the1:errbd} and Theorem  \ref{the2:localz}.
We have also shown path lengths of the mobile anchor for different number of sensors forming a connected network in a square region of fixed side length 50 meter in  Table \ref{table:2}. Here we keep the communication range fixed at 10 meter. Results show that path length does not increase much with number of sensors. It happens since degree of connectivity increases with the number of sensors in a fixed region.

Next we show the simulation results for the path planning scheme for a bounded square region.
We have simulated the proposed scheme for a bounded region by varying $X$ from $r/10$ to $r/20$ and compared with
\cite{Chia-Ho-Ou2013}. We have fixed the value of communication range $r=10$ meter. We have compared path lengths with the existing schemes
by varying $X$ for a fixed square region with side length $L=200$ meter. Results are shown in Table \ref{table:3}.
\begin{table}[]
\center
\caption{Average path length (in meter) of our scheme varying number of sensors}
\label{table:2}
\begin{tabular}{p{4cm}|p{1.5cm}|p{1.5cm}|p{1.5cm}|p{1.5cm}|p{1.5cm}}
\hline
\hline
   No. of sensors  & 100   & 150     & 200  & 250     & 300  \\
  &      &     &  & &      \\
\hline
 Average path length (meter) & 1490 & 1550 & 1640 & 1675 & 1754   \\
\hline
\hline
\end{tabular}
\end{table}
\begin{table}[h]
\center
\caption{Comparison of path length (in meter) of our scheme with existing schemes varying $X$}
\label{table:3}
\begin{tabular}
{p{1cm}|p{1.4cm}|p{2.2cm}|p{1.6cm} |p{1.4cm}|p{1.4cm}|p{1.4cm}}
\hline
\hline
X($\downarrow$) & $D_{Hexagon}$  & $D_{Chia-Ho-Ou}$  & $D_{Doublescan}$  & $D_{Hilbert}$  & $D_{Circles}$ & $D_{S-curves}$ \\
\hline
$r/10$ & 4987 & 5945 & 6019 & 5377 & 6228 & 5431 \\
$r/15$ & 4292 & 5724 & 5827 & 5185 & 6013 & 5124 \\
$r/20$ & 4271 & 5728 & 5735 & 5094 & 5911 & 5420 \\
\hline
\hline
\end{tabular}
\end{table}
\begin{table}[h]
\center
\caption{Percentage (\%) of improvement of our scheme in terms of path length compared to existing schemes for communication range $r=10$}
\label{table:6}
\begin{tabular}{p{5.4cm}|p{2.2cm}|p{1.7cm}|p{1.2cm}|p{1.2cm}|p{1.2cm}}
\hline
\hline
Existing schemes & $D_{Chia-Ho-Ou}$  & $D_{Doublescan}$  & $D_{Hilbert}$  & $D_{Circles}$ & $D_{S-curves}$ \\
\hline
\% of improvement of our scheme &      &     &  & &      \\
compared to the existing schemes in terms of path length & 22.12\% & 22.93\% & 13.45\% & 25.35\% & 15.19\% \\
\hline
\hline
\end{tabular}
\end{table}
We have shown the percentages of improvement of our scheme over various schemes compared to path length for $r=10$ in Table \ref{table:6}. We have taken averages for different $X$ values to find average improvement on path length. Results show $13.45 \%$ to $25.35 \%$ improvement over different schemes. We have also shown values of average localization error of our scheme in Table \ref{table:4} varying communication ranges $r$ and beacon distance $u$. We have taken $X=u$ in Table \ref{table:4}. Here also, localization error decreases as beacon distance decreases for any fixed communication range.

\begin{table}[]
\centering
\caption{Showing average error (in meter) for different communication range and beacon distance \label{table:4}}
\begin{tabular}
{p{4.5cm}|p{1.2cm}|p{1.2cm}|p{1.2cm}|p{1.2cm}|p{1.2cm} }
\hline
\hline
Beacon distance $\rightarrow$  & $r/10$ &$r/15$  &$r/20$  &$r/25$  &$r/30$  \\
Communication range($\downarrow$)  &&&&&\\
\hline
10 &1.59  &1.17  &0.78  &0.57  &0.46  \\
15 &2.75  &1.58  &1.22  &0.98  &0.85  \\
20 &2.93  &1.95  &1.61  &1.17  &0.89  \\
25 &4.28  &2.47  &1.61  &1.19  &0.99  \\
30 &4.94  &2.65  &1.72  &1.36  &1.11  \\
\hline
\hline
\end{tabular}
\end{table}

\section{Conclusion}
In this paper we have proposed path planning for a mobile anchor in a connected network and also in a bounded rectangular region.
Our movement strategy reduces the requirement of three beacon points for localization to two beacon points, which helps improving
path length for a rectangular region.
In a connected network, once a sensor is localized, our path planning is able to localize all its neighbors
with one hexagonal movement around the sensor. After completing one hexagonal movement, anchor decides its next
destination depending upon received information from the neighboring sensors and localize all sensors along
the way it moves. The novelty is that without knowing the boundary of the network,
our distributed algorithm localizes all sensors using connectivity without any range estimation.
We have also computed the length of the path traversed by the anchor for different number of sensors with good
localization accuracy based on simulation.
We have compared path length of our proposed movement strategy for rectangular region theoretically with existing
literature to show better performance. Simulation results show $13.45 \%$ to $25.35 \%$ improvement of our path
planning strategy over existing strategies in terms of path length of mobile anchor for rectangular region.
In future we will try to investigate path planning in presence of obstacles.
\label{sec:conclusion}

\section*{Acknowledgement}
The first author is thankful to the Council of Scientific and Industrial Research (CSIR), Govt. of India,
for financial support during this work.

\bibliographystyle{plain}
\bibliography{ref}

\begin{thebibliography}{10}

\bibitem{Galstyan2004}
Galstyan A., Krishnamachari B., Lerman K., and Pattem S.
\newblock Distributed online localization in sensor networks using a moving
  target.
\newblock In {\em Third International Symposium on Information Processing in
  Sensor Networks, 2004. IPSN 2004}, pages 61--70, 2004.

\bibitem{Ammar2010}
Waleed Ammar, Ahmed ElDawy, and Moustafa Youssef.
\newblock Secure localization in wireless sensor networks: A survey.
\newblock {\em CoRR}, abs/1004.3164, 2010.

\bibitem{Chang2012}
C.~T. Chang, C.~Y. Chang, and C.~Y. Lin.
\newblock Anchor-guiding mechanism for beacon-assisted localization in wireless
  sensor networks.
\newblock {\em IEEE SENSORS JOURNAL}, 12(5):1098--1111, 2012.

\bibitem{Chen2012}
Hongyang Chen, Gang Wang, Zizhuo Wang, Hing-Cheung So, and H~Vincent Poor.
\newblock Non-line-of-sight node localization based on semi definite
  programming in wireless sensor networks.
\newblock {\em IEEE Transaction on Wireless Communication}, 11(1):108--116,
  2012.

\bibitem{Delaet11}
Sylvie Delaët, Partha~Sarathi Mandal, Mariusz~A. Rokicki, and Sébastien
  Tixeuil.
\newblock Deterministic secure positioning in wireless sensor networks.
\newblock {\em Theoretical Computer Science}, 412(35):4471 -- 4481, 2011.

\bibitem{Han2013}
Han G., Xu~H., Jiang J., Shu L., Hara T., and Nishio S.
\newblock Path planning using a mobile anchor node based on trilateration in
  wireless sensor networks.
\newblock {\em Wireless Communications and Mobile Computing},
  13(14):1324--1336, 2013.

\bibitem{Huang2007}
Rui Huang and Gergely~V. Z{\'a}ruba.
\newblock Static path planning for mobile beacons to localize sensor networks.
\newblock In {\em PerCom Workshops}, pages 323--330, 2007.

\bibitem{Kim2011}
Kyunghwi Kim, Byunghyuk Jung, Wonjun Lee, and Ding-Zhu Du.
\newblock Adaptive path planning for randomly deployed wireless sensor
  networks.
\newblock {\em J. Inf. Sci. Eng.}, 27(3):1091--1106, 2011.

\bibitem{Koutsonikolas2007}
Dimitrios Koutsonikolas, Saumitra~M. Das, and Y.~Charlie Hu.
\newblock Path planning of mobile landmarks for localization in wireless sensor
  networks.
\newblock {\em Computer Communications}, 30(13):2577--2592, 2007.

\bibitem{Lee2009}
Sangho Lee, Eunchan Kim, Chungsan Kim, and Kiseon Kim.
\newblock Localization with a mobile beacon based on geometric constraints in
  wireless sensor networks.
\newblock {\em IEEE Transactions on Wireless Communications}, 8(12):5801--5805,
  2009.

\bibitem{Li2008}
Hongjun Li, Jianwen Wang, Xun Li, and Hongxu Ma.
\newblock Real-time path planning of mobile anchor node in localization for
  wireless sensor networks.
\newblock In {\em International Conference on Information and Automation, 2008.
  ICIA 2008.}, pages 384 -- 389, 2008.

\bibitem{Mitton2012}
Xu~Li, Nathalie Mitton, Isabelle Simplot-Ryl, and David Simplot-Ryl.
\newblock Dynamic beacon mobility scheduling for sensor localization.
\newblock {\em IEEE Trans. Parallel Distrib. Syst.}, 23(8):1439--1452, 2012.

\bibitem{Sichitiu2004}
Sichitiu M.L. and Ramadurai V.
\newblock Localization of wireless sensor networks with a mobile beacon.
\newblock In {\em IEEE International Conference on Mobile Ad-hoc and Sensor
  Systems, 2004}, pages 174--183, 2004.

\bibitem{Chia-Ho-Ou2013}
Chia-Ho Ou.
\newblock Path planning algorithm for mobile anchor-based localization in
  wireless sensor networks.
\newblock {\em IEEE SENSORS JOURNAL}, 13(2):466--475, 2013.

\bibitem{Seow:2008}
Chee~Kiat Seow and Soon~Yim Tan.
\newblock Non-line-of-sight localization in multipath environments.
\newblock {\em IEEE Transactions on Mobile Computing}, 7:647--660, May 2008.

\bibitem{Shih2010}
Chia-Yen Shih and Pedro~José Marrón.
\newblock {COLA}: Complexity-reduced trilateration approach for {3D}
  localization in wireless sensor networks.
\newblock SENSORCOMM '10, pages 24--32, Washington, DC, USA, 2010. IEEE
  Computer Society.

\bibitem{Ssu2005}
Kuo-Feng Ssu, Chia-Ho Ou, and Hewijin~Christine Jiau.
\newblock Localization with mobile anchor points in wireless sensor networks.
\newblock {\em IEEE T. Vehicular Technology}, 54(3):1187--1197, 2005.

\bibitem{Xiao2008}
Bin Xiao, Hekang Chen, and Shuigeng Zhou.
\newblock Distributed localization using a moving beacon in wireless sensor
  networks.
\newblock {\em IEEE Trans. Parallel Distrib. Syst.}, 19(5):587--600, 2008.

\bibitem{zhang2006}
Yanchao Zhang, Wei Liu, Yuguang Fang, and Dapeng Wu.
\newblock Secure localization and authentication in ultra-wideband sensor
  networks.
\newblock {\em IEEE Journal on Selected Areas in Communications},
  24(4):829--835, 2006.

\end{thebibliography}
\end{document}